\pdfoutput=1
\newif\ifFull
\Fullfalse
\ifFull
\documentclass[11pt]{article}
\usepackage[margin=1in]{geometry}
\else
\documentclass[11pt]{article}
\usepackage[margin=1in]{geometry}
\hyphenpenalty=300
\fi

\renewcommand{\emph}[1]{\textit{\textbf{#1}}}
\usepackage{times}
\usepackage{amsmath}
\usepackage{amsthm}
\usepackage{graphicx}
\usepackage[noend]{algorithmic}
\usepackage{cite}

\newtheorem{theorem}{Theorem}
\newtheorem{lemma}[theorem]{Lemma}

\renewenvironment{proof}{\noindent{\bf Proof:}}{\hspace*{\fill}\rule{6pt}{6pt}\bigskip}

\begin{document}
\title{Zig-zag Sort: A Simple Deterministic Data-Oblivious \\
Sorting Algorithm Running in $O(n\log n)$ Time}

\author{Michael T.~Goodrich \\[5pt]
Department of Computer Science \\
University of California, Irvine \\
Irvine, CA 92697 USA \\
\texttt{goodrich@acm.org}
}

\date{}

\maketitle

\begin{abstract}
We describe and analyze \emph{Zig-zag Sort}---a 
deterministic data-oblivious sorting algorithm 
running in $O(n\log n)$ time that is
arguably simpler than previously known algorithms with 
similar properties, which are based on the AKS sorting network.
Because it is data-oblivious and deterministic,
Zig-zag Sort can be implemented as a simple
$O(n\log n)$-size sorting network, thereby providing a solution
to an open problem posed by Incerpi and Sedgewick in 1985.
In addition,
Zig-zag Sort is a variant of Shellsort, and is, in fact, the first 
deterministic Shellsort variant running in $O(n\log n)$ time.
The existence of such an algorithm was posed as an open problem by Plaxton 
\textit{et al.} in 1992 and also by Sedgewick in 1996.
More relevant for today, however, is the fact that
the existence of a simple data-oblivious deterministic sorting algorithm
running in $O(n\log n)$ time simplifies the ``inner-loop''
computation in several
proposed oblivious-RAM simulation methods (which utilize AKS sorting networks), 
and this, in turn, implies simplified mechanisms for
privacy-preserving data outsourcing in several cloud computing applications.
We provide both constructive and non-constructive 
implementations of Zig-zag Sort,
based on the existence of a circuit known as an \emph{$\epsilon$-halver},
such that the constant factors in our constructive 
implementations are orders of
magnitude smaller than those for constructive variants of the 
AKS sorting network,
which are also based on the use of $\epsilon$-halvers.
\end{abstract}


\section{Introduction}
An algorithm is \emph{data-oblivious} if its sequence 
of possible memory accesses is independent of its input values.
Thus, a deterministic
algorithm is data-oblivious if it makes the same sequence of memory accesses
for all its possible inputs of a given size, $n$, with the only variations
being the outputs of atomic primitive operations that are performed.
For example, a data-oblivious sorting algorithm may make 
``black-box'' use of a \emph{compare-exchange} 
operation, which is given an ordered pair of two input values, $(x,y)$, 
and returns $(x,y)$ if $x\le y$ and returns $(y,x)$ otherwise.
A sorting algorithm that uses only compare-exchange operations
is also known as a \emph{sorting network} (e.g.,
see~\cite{baddardesigning,knuthsorting}), 
since it can be viewed as a pipelined sequence of 
compare-exchange gates performed on pairs of $n$ input wires, each of which
is initially provided with an input item.
The study of data-oblivious 
sorting networks is classic in algorithm design, including
such vintage methods as bubble sort, 
Batcher's odd-even and bitonic sorting networks~\cite{Batcher:1968}, and the 
AKS sorting network~\cite{Ajtai:1983,aks-83} and 
its variations~\cite{Chvatal92,Leighton:1984,p-isn-90,Pippenger,s-snlg-09}.
In addition,
Shellsort and all its variations (e.g., see~\cite{s-asra-96})
are data-oblivious sorting algorithms,
which trace their origins
to a classic 1959 paper by the algorithm's namesake~\cite{Shell:1959}.
More recently,
examples of randomized data-oblivious sorting algorithms running
in $O(n\log n)$ time that sort with high probability include
constructions by
Goodrich~\cite{Goodrich:2011,Goo12} and Leighton and Plaxton~\cite{lp-hsn-98}.

One frustrating feature of previous work on deterministic data-oblivious 
sorting is that all known algorithms running in $O(n\log n)$ 
time~\cite{Ajtai:1983,aks-83,Chvatal92,Leighton:1984,p-isn-90,Pippenger,s-snlg-09},
which are based on the AKS sorting network, are
arguably quite complicated, while many of the known algorithms 
running in $\omega(n\log n)$ time are conceptually simple.
For instance,
given an unsorted array, $A$, of $n$ comparable items,
the Shellsort paradigm is based on the simple 
approach of making several passes up
and/or down $A$, performing compare-exchange operations between pairs
of items stored at obliviously-defined index intervals.  
Typically, the compare-exchanges are initially between pairs 
that are far apart in $A$ and the distances between
such pairs are gradually reduced from one pass to the next 
until one is certain that $A$ is sorted.
In terms of asymptotic performance,
the best previous Shellsort variant is due to Pratt~\cite{Pratt:1972},
which runs in $\Theta(n\log^2 n)$ time and is based on 
the elegant idea of comparing pairs of items separated
by intervals that determined by a monotonic sequence
of the products of powers of $2$ and $3$ less than $n$.
There has subsequently been a considerable amount of work on the
Shellsort algorithm~\cite{Shell:1959}
since its publication over 50 years ago (e.g., see~\cite{s-asra-96}), 
but none of this previous work has led to a simple deterministic data-oblivious
sorting algorithm running in $O(n\log n)$ time.

\ifFull
\subsection{Privacy-Preserving Data Outsourcing in Cloud Computing Applications}
\fi
Independent of their historical appeal, 
data-oblivious algorithms 
are having a resurgence of interest 
of late, due to their applications to
privacy-preserving cloud computing.
In such applications, a client, Alice, outsources her data to an
honest-but-curious
server, Bob, who processes read/write requests for Alice. 
In order to protect her privacy, Alice must both encrypt
her data and obfuscate any data-dependent access patterns
for her data.
Fortunately, she can achieve these two goals
through any of a number of recent results
for simulating arbitrary RAM algorithms
in a privacy-preserving manner in a cloud-computing environment
using data-oblivious sorting as an
``inner-loop'' computation 
(e.g., see~\cite{dmn-pso-11,Eppstein:2010,Goldreich:1996,gm-ppa-11,Goodrich:2012:PGD}).
A modern challenge, however,
is that these simulation results either use the AKS sorting network
for this inner loop or compromise on asymptotic performance.
Thus, there is a modern motivation for 
a simple deterministic data-oblivious sorting
algorithm running in $O(n\log n)$ time.

\ifFull
\subsection{Our Results}
\fi
In this paper, we provide a simple deterministic data-oblivious
sorting algorithm running in $O(n\log n)$ time, which we call \emph{Zig-zag Sort}.
This result solves the well-known 
(but admittedly vague) open problem of designing
a ``simple'' sorting network of size $O(n\log n)$,
posed by Incerpi and Sedgewick~\cite{Incerpi1985}.
Zig-zag Sort is a variant of Shellsort, and is, in fact, 
the first deterministic Shellsort variant running in $O(n\log n)$
time, which also solves open problems of 
Sedgewick~\cite{s-asra-96} and
Plaxton {\it et al.}~\cite{pps-ilbs-92,Plaxton1997}.
Zig-zag Sort differs
from previous deterministic Shellsort variants in that the increments
used in each its passes are not fixed, but instead 
vary according to ranges that are halved in 
each of $\lceil \log n\rceil$ phases.
As it turns out,
such varying increments are actually necessary to achieve an
$O(n\log n)$ running time, since any Shellsort algorithm with fixed
increments and $O(\log n)$ phases must have a running time of 
at least $\Omega(n\log^2 n/(\log\log n)^2)$, 
and any such algorithm with monotonically decreasing increments must
run in $\Omega(n\log^2 n/\log\log n)$ time, 
according to known lower
bounds~\cite{c-lbss-93,pps-ilbs-92,Plaxton1997,Poonen1993}.

In this paper, we concentrate primarily on conceptual simplicity,
with the result that the constant factors in our analysis of
Zig-zag Sort are admittedly not small.
These constant factors are nevertheless orders of magnitude smaller
than those for constructive versions of
the AKS sorting network~\cite{Ajtai:1983,aks-83}
and its recent optimization by Seiferas~\cite{s-snlg-09},
and are on par with the best non-constructive
variants of the AKS sorting network~\cite{Chvatal92,p-isn-90,Pippenger}.
Thus,
for several oblivious-RAM simulation
methods 
(e.g., see~\cite{dmn-pso-11,Eppstein:2010,Goldreich:1996,gm-ppa-11,Goodrich:2012:PGD}),
Zig-zag Sort provides a conceptually simple
alternative to the previous $O(n\log n)$-time 
deterministic data-oblivious sorting algorithms, which are all
based on the AKS sorting network.\footnote{We should stress, however, that 
  Zig-zag Sort is
  \emph{not} a parallel algorithm, like the AKS sorting network, which
  has $O(\log n)$ depth. Even when Zig-zag Sort is implemented as 
  a parallel sorting network, it still runs in $O(n\log n)$ time.
  \ifFull It
  is meant, therefore, only as a sequential data-oblivious alternative to
  the AKS sorting algorithm.\fi}
The conceptual simplicity of Zig-zag Sort is not matched by a
simplicity in proving it is correct, however.  Instead, its proof of 
correctness 
is based on a fairly intricate analysis involving the tuning of several
parameters with respect to a family of potential functions. Thus,
while the Zig-zag Sort algorithm can be described in a few lines of
pseudocode,
our proof of correctness consumes much of this paper, with most of 
the details relegated to an appendix.

\section{The Zig-zag Sort Algorithm}
The Zig-zag Sort algorithm 
is based on repeated use of a procedure known as an
\emph{$\epsilon$-halver}~\cite{Ajtai:1983,aks-83,aks-92,Manos1999},
which incidentally also forms the basis for the AKS sorting network
and its variants.
\begin{itemize}
\item
An \emph{$\epsilon$-halver} is a data-oblivious procedure
that takes a pair, $(A,B)$, of arrays 
of comparable items, with each array being of size $n$,
and performs a sequence of compare-exchanges, such that, for any
$k\le n$, at most $\epsilon k$ of the largest $k$ elements 
of $A\cup B$ will be in $A$
and at most $\epsilon k$ of the smallest $k$ elements 
of $A\cup B$ will be in $B$,
where $\epsilon \ge 0$.
\end{itemize}
In addition, there is a relaxation of this definition, which is known
as an \emph{($\epsilon,\lambda)$-halver}~\cite{aks-92}:
\begin{itemize}
\item
An \emph{$(\epsilon,\lambda)$-halver} satisfies the above definition
for being an $\epsilon$-halver for $k\le \lambda n$, where $0<\lambda<1$.
\end{itemize}
We introduce a new construct, which we call a
\emph{$(\delta,\lambda)$-attenuator},
which takes this concept further:
\begin{itemize}
\item
A \emph{$(\delta,\lambda)$-attenuator} is a data-oblivious procedure
that takes a pair, $(A,B)$, of arrays
of comparable items, with each array being of size $n$,
such that $k_1$ of the largest $k$ elements of $A\cup B$ are in $A$
and $k_2$ of the smallest $k$ elements of $A\cup B$ are in $B$,
and performs a sequence of compare-exchanges such that
at most $\delta k_1$ of the largest $k$ elements will be in $A$
and at most $\delta k_2$ of the smallest $k$ elements will be in
$B$,
with $k\le \lambda n$, $0<\lambda<1$, and $\delta\ge 0$.
\end{itemize}

We give a pseudo-code description of
Zig-zag Sort in Figure~\ref{fig:zigzag}.
The name ``Zig-zag Sort'' is derived from two places that involve
procedures that could be called ``zig-zags.''
The first is in the computations performed in the outer loops, where
we make a Shellsort-style pass up a partitioning of the input
array into subarrays (in what we call
the ``outer zig'' phase) that we follow 
with a Shellsort-style pass down the sequence of subarrays (in what we call
the ``outer zag'' phase).
The second place is
inside each such loop, where we preface the set of compare-exchanges
for each pair of
consecutive subarrays by first swapping the elements in the two
subarrays, in a step we call the ``inner zig-zag'' step.
Of course, such a swapping
inverts the ordering of the elements in these two subarrays, which
were presumably put into nearly sorted order in the previous iteration.
Nevertheless, in spite of the counter-intuitive nature of this inner
zig-zag step, we show in the analysis section below that this
step is, in fact, quite useful.

\begin{figure}[hbt!]
\textbf{Algorithm} \textsf{ZigZagSort}$(A)$ 
\ifFull \\
\emph{Input:} An array, $A$, of $n$ comparable items\\
\emph{Output:} The array, $A$, in sorted order 
\fi
\begin{algorithmic}[1]
\STATE $A^{(0)}_1 \leftarrow A$
\FOR {$j\leftarrow 1$ to $k$}
\FOR[splitting step] {$i\leftarrow 1$ to $2^{j-1}$}
\STATE Partition $A^{(j-1)}_i$ into halves, defining 
subarrays, $A^{(j)}_{2i-1}$ and $A^{(j)}_{2i}$, of size $n/2^j$ each
\STATE \textsf{Reduce}$(A^{(j)}_{2i-1},\, A^{(j)}_{2i})$
\ENDFOR
\FOR[outer zig] {$i\leftarrow 1$ to $2^j-1$}
\STATE Swap the items in $A^{(j)}_i$ and $A^{(j)}_{i+1}$ \hspace{1em} \{inner zig-zag\}
\STATE \textsf{Reduce}$(A^{(j)}_i,\,A^{(j)}_{i+1})$
\ENDFOR
\FOR[outer zag] {$i\leftarrow 2^j$ downto $2$}
\STATE Swap the items in $A^{(j)}_i$ and $A^{(j)}_{i-1}$ \hspace{1em} \{inner zig-zag\}
\STATE \textsf{Reduce}$(A^{(j)}_{i-1},\,A^{(j)}_{i})$
\ENDFOR
\ENDFOR
\end{algorithmic}
\vspace*{-6pt}
\caption{\label{fig:zigzag}
\textbf{Zig-zag Sort} (where $n=2^k$).
The algorithm,
\textsf{Reduce}$(A,B)$,
is simultaneously an $\epsilon$-halver, a
$(\beta,5/6)$-halver, and 
a $(\delta,5/6)$-attenuator, for appropriate values of 
$\epsilon$, $\delta$, and $\beta$.
Assuming that \textsf{Reduce} runs in $O(n)$ time, Zig-zag Sort clearly runs
in $O(n\log n)$ time.
}
\end{figure}

We illustrate, in Figure~\ref{fig:zigzag2},
how an outer zig phase would look as a sorting network.

\begin{figure}[hbt]
\vspace*{-6pt}
\begin{center}
\includegraphics[width=4.6in, trim=0.15in 4.1in 0.5in 0.15in, clip]{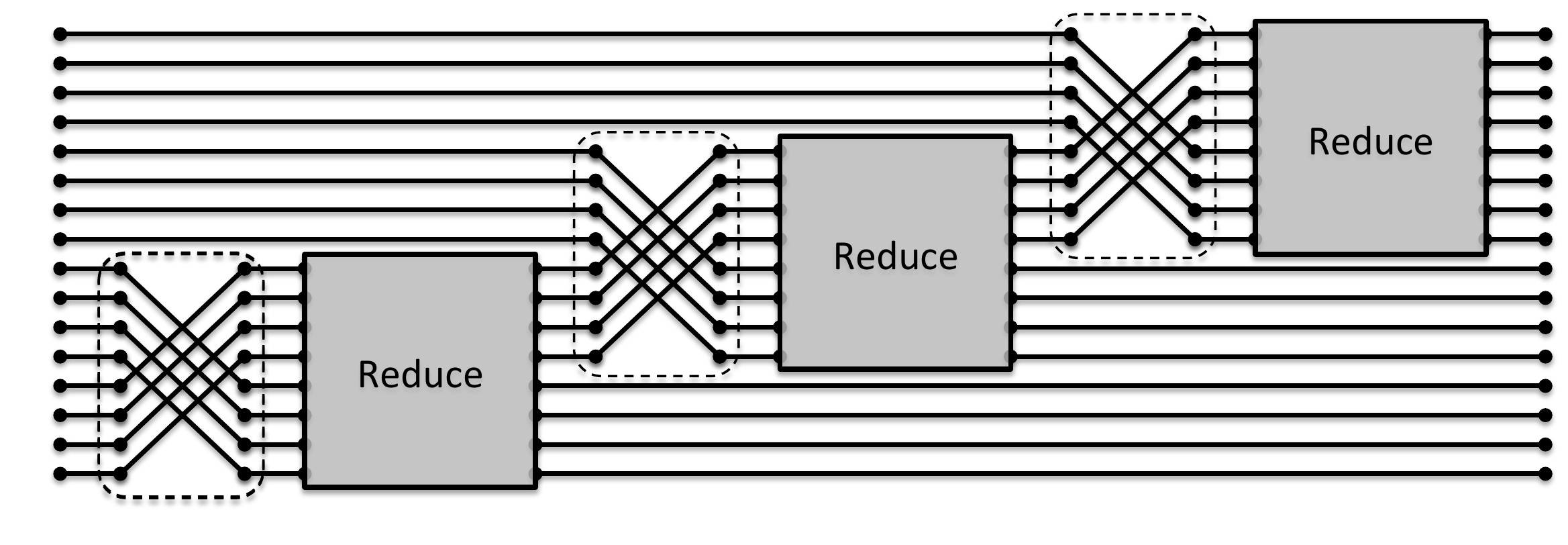}
\end{center}
\vspace*{-30pt}
\caption{An outer zig phase drawn as a sorting network, for $j=2$ and $n=16$. 
The inner zig-zag step is shown inside a dashed rounded rectangle.
Note: the inner zig-zag step could alternatively be implemented as a 
compare-exchange
of each element in the lower half with a unique element of the upper half;
we implement it as a swap, however, to reduce the total number of comparisons.}
\label{fig:zigzag2}
\end{figure}

\section{Halvers and Attenuators}
\label{sec:reduce}
In this section, we give the details for
\textsf{Reduce}, which is simultaneously an
$\epsilon$-halver, $(\beta,5/6)$-halver, and
$(\delta,5/6)$-attenuator, where the parameters, $\epsilon$,
$\beta$, and $\delta$, are functions of a single input parameter,
$\alpha>0$, determined in the
analysis section (\S\ref{sec:correct}) of this paper.
In particular, let us assume that we have a linear-time $\alpha$-halver
procedure, \textsf{Halver}, 
which operates on a pair of equal-sized arrays whose
size is a power of 2.
There are several published results for constructing such procedures 
(e.g., see~\cite{Hoory2006,Xie1998}),
so we assume the use of one of these algorithms.
The algorithm, \textsf{Reduce}, involves a call to this \textsf{Halver}
procedure and then to a recursive algorithm, \textsf{Attenuate}, which
makes additional calls to \textsf{Halver}.
See Figure~\ref{fig:reduce}.

\begin{figure}[hbt!]
\textbf{Algorithm} \textsf{Attenuate}$(A,B)$:
\ifFull\\
\emph{Input:} Two arrays, $A$ and $B$, of $n$ comparable items
each, where $n=2^k$
\fi
\begin{algorithmic}[1]
\IF {$n\le 8$}
\STATE Sort $A\cup B$ \ \ and\ \  \textbf{return}
\ENDIF
\STATE Partition $A$ into halves, defining $A^{(1)}_1$ and $A^{(1)}_2$, and
partition $B$ into halves, defining $B^{(1)}_1$ and $B^{(1)}_2$
\STATE \textsf{Halver}$(A^{(1)}_1,A^{(1)}_2)$
\STATE \textsf{Halver}$(B^{(1)}_1,B^{(1)}_2)$
\STATE \textsf{Halver}$(A^{(1)}_2,B^{(1)}_1)$
\STATE \textsf{Attenuate}$(A^{(1)}_2,B^{(1)}_1)$  
       \hspace*{2em} \{first recursive call\}
\STATE Partition $A^{(1)}_2$ into halves, defining $A^{(2)}_{1}$ and 
                                           $A^{(2)}_{2}$,
and partition $B^{(1)}_1$ into halves, defining $B^{(2)}_{1}$ and 
                                           $B^{(2)}_{2}$
\STATE \textsf{Halver}$(A^{(2)}_1,A^{(2)}_2)$

\STATE \textsf{Halver}$(B^{(2)}_1,B^{(2)}_2)$

\STATE \textsf{Halver}$(A^{(2)}_2,B^{(2)}_1)$
\STATE \textsf{Attenuate}$(A^{(2)}_{2},B^{(2)}_{1})$  \hspace*{2em} \{second 
                                           recursive call\}
\end{algorithmic}
\ifFull
\bigskip
\else
\medskip
\fi
\textbf{Algorithm} \textsf{Reduce}$(A,B)$: 
\ifFull\\
\emph{Input:} Two arrays, $A$ and $B$, of $n$ comparable items
each, where $n=2^k$
\fi
\begin{algorithmic}[1]
\IF {$n\le 8$}
\STATE Sort $A\cup B$ \ \ and\ \  \textbf{return}
\ENDIF
\STATE \textsf{Halver}$(A,B)$
\STATE \textsf{Attenuate}$(A,B)$
\end{algorithmic}
\vspace*{-8pt}
\caption{\label{fig:reduce}
\textbf{The \textsf{Attenuate} and \textsf{Reduce} algorithms.} 
We assume the existence of an $O(n)$-time data-oblivious procedure, 
\textsf{Halver}$(C,D)$,
which performs an $\alpha$-halver operation on two subarrays, $C$ and $D$,
each of the same power-of-2 size.
We also use a partition operation, which is just a way of viewing
a subarray, $E$, as two subarrays, $F$ and $G$, where $F$
is the first half of $E$ and $G$ is the second half of $E$.
}
\end{figure}

\pagebreak
We illustrate the data flow for the \textsf{Attenuate} algorithm
in Figure~\ref{fig:attenuate}.

\begin{figure}[hbt]
\begin{center}
\includegraphics[width=5.1in, trim=0.15in 1.0in 0.2in 0.5in, clip]{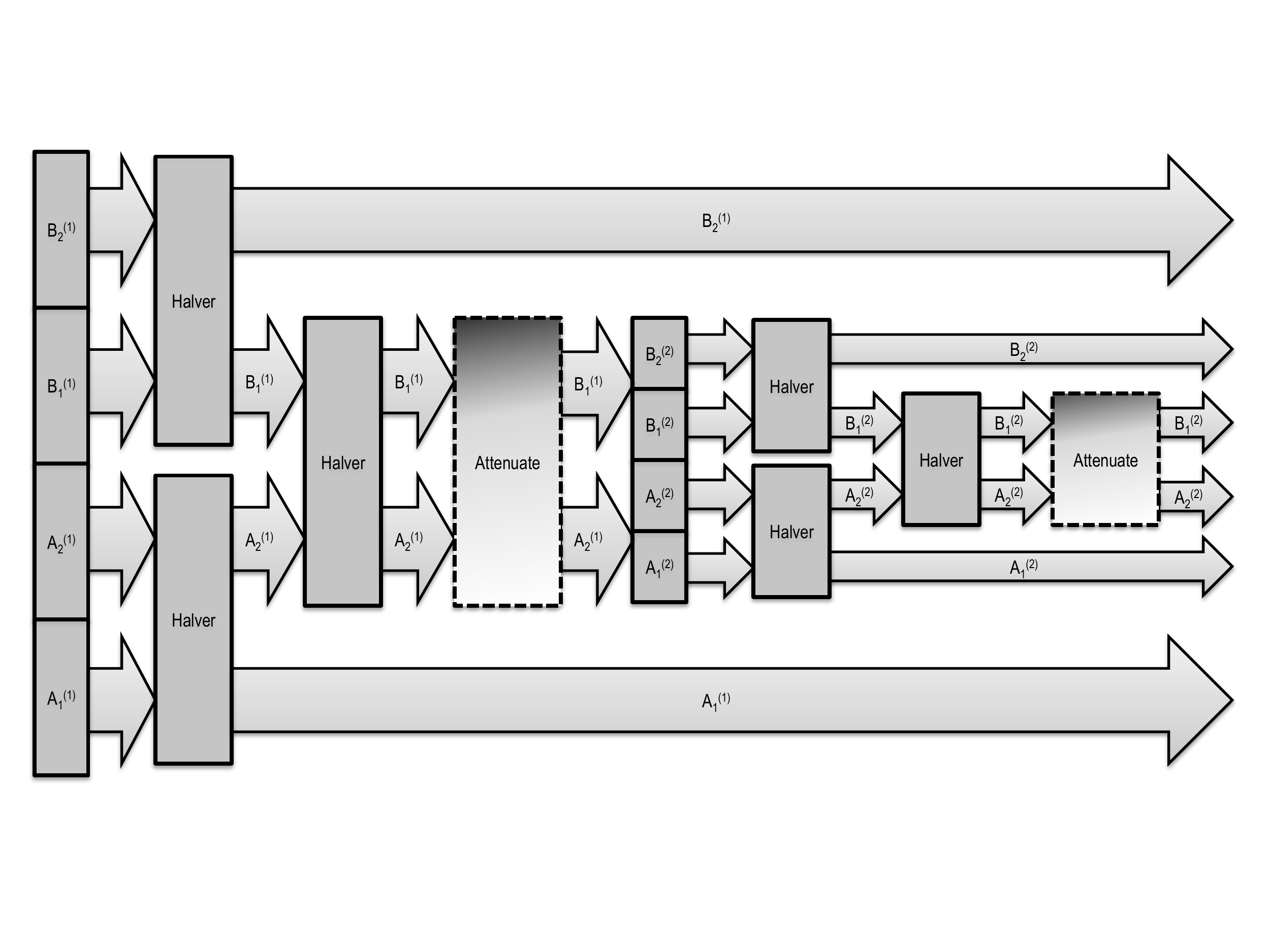}
\end{center}
\vspace*{-24pt}
\caption{Data flow in the \textsf{Attenuate} algorithm.}
\label{fig:attenuate}
\end{figure}

\section{An Analysis of Zig-Zag Sort}
\label{sec:correct}
Modulo the construction of a linear-time $\alpha$-halver procedure,
\textsf{Halver}, which we discuss in more detail
in Section~\ref{sec:constants}, 
the above discussion is a complete description of
the Zig-zag Sort algorithm.
Note, therefore, that the \textsf{Reduce} algorithm runs in $O(n)$ time,
since the running time for the general case of the recursive 
algorithm, \textsf{Attenuate}, can be characterized by the recurrence
equation,
\[
T(n) = T(n/2) + T(n/4) + bn,
\]
for some constant $b\ge1$.
In terms of the running time of Zig-zag Sort,
then,
it should be clear from the above description
that the Zig-zag Sort algorithm runs in $O(n\log n)$ time, since it performs 
$O(\log n)$ iterations, with each iteration requiring $O(n)$ time.
Proving that Zig-zag Sort is correct is less obvious, however, and
doing so consumes the bulk of the remainder of this paper.

\subsection{The 0-1 Principle}
As is common in the analysis of sorting networks (e.g.,
see~\cite{baddardesigning,knuthsorting}), 
our proof of correctness makes use of a well-known
concept known as the \emph{0-1 principle}.
\begin{theorem}[\textbf{The 0-1 Principle}~\cite{baddardesigning,knuthsorting}]
A deterministic data-oblivious (comparison-based)
sorting algorithm correctly sorts 
any input array if and only if it correctly sorts a binary array of
0's and 1's.
\end{theorem}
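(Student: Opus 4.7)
The plan is to prove the nontrivial direction by contraposition: assume the algorithm fails to sort some arbitrary input array $A$, and produce a 0-1 array that it also fails to sort. The forward direction is immediate, since ``correctly sorts every input'' trivially includes every binary input.

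The core of the argument is a commutation lemma: any monotone nondecreasing function $f\colon \mathbb{R}\to\mathbb{R}$ commutes with a compare-exchange. Concretely, for any pair $(x,y)$, we have $\min(f(x),f(y))=f(\min(x,y))$ and $\max(f(x),f(y))=f(\max(x,y))$, so applying $f$ componentwise to the inputs of a compare-exchange gate and then running the gate yields the same result as running the gate first and then applying $f$. The first step of the proof is to verify this observation from the definition of monotonicity and a case split on whether $x\le y$.

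The next step is to lift this to the whole algorithm using data-obliviousness. Because the sequence of compare-exchange operations is fixed in advance (independent of the actual input values), I would induct on the number of such operations executed. The inductive hypothesis says that after $t$ operations on input $f(A)$, the current state equals $f$ applied componentwise to the state after $t$ operations on input $A$; the inductive step uses the commutation lemma on the $(t{+}1)$st compare-exchange, noting that the indices it touches are the same in both runs precisely because the algorithm is data-oblivious. So the final output on $f(A)$ equals $f$ applied componentwise to the final output on $A$.

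Now suppose the algorithm does not sort $A$: then there exist positions $i<j$ such that the output entries $x,y$ at those positions satisfy $x>y$. Define the threshold function
\[
f(z) \;=\; \begin{cases} 0 & \text{if } z \le y, \\ 1 & \text{if } z > y, \end{cases}
\]
which is monotone nondecreasing. Then $f(A)$ is a 0-1 array, and by the commutation property, running the algorithm on $f(A)$ puts $f(x)=1$ at position $i$ and $f(y)=0$ at position $j>i$, so the output on $f(A)$ is not sorted. This contradicts the hypothesis that the algorithm correctly sorts all 0-1 inputs, completing the contrapositive.

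The only delicate point is the data-obliviousness hypothesis in the inductive step: without it, the algorithm could, in principle, inspect values and route compare-exchanges to different index pairs on the two runs, which would break the induction. I would therefore state the inductive hypothesis carefully in terms of the fixed, input-independent schedule of compare-exchange operations guaranteed by the definition of data-obliviousness recalled in the introduction.
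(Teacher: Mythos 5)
Your proof is correct: the monotone-function commutation lemma, the induction over the fixed compare-exchange schedule (where data-obliviousness is exactly what makes the induction go through), and the threshold function $f(z)=0$ for $z\le y$, $f(z)=1$ for $z>y$ constitute the standard argument for the 0-1 principle. Note that the paper does not prove this statement at all---it is quoted as a known result with citations to the sorting-network literature---and your argument is precisely the classical proof found in those references, so there is nothing to reconcile between your approach and the paper's.
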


Thus, for the remainder of our proof of correctness, let us assume
we are operating on items whose keys are either 0 or 1. 
For instance, we use this principle in the following lemma, which we
use repeatedly in our analysis, since
there are several points when we reason about the effects
of an $\epsilon$-halver in contexts beyond its normal limits.

\begin{lemma}[\textbf{Overflow Lemma}]
\label{lem:overflow}
Suppose an $\epsilon$-halver is applied to two arrays, $A$ and $B$, of size
$n$ each, and let a parameter, $k>n$, be given.
Then at most $\epsilon n + (1-\epsilon)\cdot (k-n)$ of the $k$ largest
elements in $A\cup B$ will be in $A$
and at most $\epsilon n + (1-\epsilon)\cdot (k-n)$ of the $k$ smallest
elements in $A\cup B$ will be in $B$.
\end{lemma}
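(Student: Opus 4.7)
The plan is to bound the top-$k$ count in $A$ indirectly, via the \emph{complementary} bottom-$(2n-k)$ set, which is small enough to fall inside the $\epsilon$-halver's standard guarantee. Concretely, since the lemma assumes $k>n$, the parameter $k':=2n-k$ satisfies $k'<n$, so the (symmetric) halver property applies to it: at most $\epsilon k'$ of the smallest $k'$ elements of $A\cup B$ end up in $B$, and therefore at least $(1-\epsilon)k' = (1-\epsilon)(2n-k)$ of them end up in $A$.

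Now I would use the fact that $|A|=n$ and that every element of $A$ is either among the top $k$ of $A\cup B$ or among the bottom $2n-k$ of $A\cup B$ (these two sets partition $A\cup B$). So the number of top-$k$ elements in $A$ is exactly $n$ minus the number of bottom-$(2n-k)$ elements in $A$, which by the previous step is at most
\[
n - (1-\epsilon)(2n-k) = \epsilon n + (1-\epsilon)(k-n).
\]
The symmetric bound for $B$ (at most $\epsilon n + (1-\epsilon)(k-n)$ of the $k$ smallest elements lie in $B$) follows by exactly the same argument with the roles of $A$ and $B$, and of ``largest'' and ``smallest,'' swapped.

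Since the $\epsilon$-halver guarantee is stated for arbitrary comparable items, no appeal to the 0-1 principle is needed; the argument is a pure counting/complementation argument using only the definition and the constraint $|A|=|B|=n$. I do not expect any genuine obstacle here — the only thing to be careful about is the arithmetic simplification of $n-(1-\epsilon)(2n-k)$ and checking that the hypothesis $k>n$ makes $2n-k$ an admissible parameter (in particular nonnegative, which holds as long as $k\le 2n$; if $k>2n$ the statement is vacuous because no more than $2n$ elements exist).
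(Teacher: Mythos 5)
Your proof is correct and is essentially the paper's argument: both apply the halver guarantee to the complementary set of the $2n-k<n$ smallest elements and then recover the bound on the top-$k$ count by complementation, with the same arithmetic $n-(1-\epsilon)(2n-k)=\epsilon n+(1-\epsilon)(k-n)$. The only cosmetic difference is that the paper phrases the counting via the 0-1 principle (0's = smallest $2n-k$, 1's = largest $k$) and counts within $B$, whereas you count directly within $A$ using the general definition; the substance is identical.
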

\begin{proof}
Let us focus on the bound for the $k$ largest elements, as the argument for
the $k$ smallest is similar.
By the 0-1 principle, suppose $A$ and $B$ are binary arrays, and there are $k$
1's and $2n-k$ 0's in $A\cup B$.
Since $2n-k<n$, in this case, 
after performing an $\epsilon$-halver operation, at most 
$\epsilon (2n-k)$ of the 0's will remain in $B$. That is, the number
of 1's in $B$ is at least $n-\epsilon(2n-k)$,
which implies that the number of 1's in $A$ is at most
\begin{eqnarray*}
k-(n-\epsilon (2n-k)) &=& k - n + 2\epsilon n - \epsilon k \\
&=& \epsilon n + k - n + \epsilon n - \epsilon k \\
&=& \epsilon n + (1-\epsilon)\cdot (k - n).
\end{eqnarray*}
\end{proof}

Because of the 0-1 principle, we can characterize the
distance of a subarray from being sorted by counting 
the number of 0's and 1's it contains.
Specifically, we define the \emph{dirtiness}, $D(A_i^{(j)})$, of a
subarray, $A_i^{(j)}$, 
to be the absolute value of 
the difference between the number of $1$'s currently in $A_i^{(j)}$
and the number that should be in $A_i^{(j)}$ in a final sorting of $A$.
Thus, $D(A_i^{(j)})$ counts the number of $1$'s in a subarray that
should be all $0$'s and the number of $0$'s in a subarray that should
be all $1$'s.
Any subarray of a sorted array would have a dirtiness of $0$.

\subsection{Establishing the Correctness of the \textsf{Reduce} Method}
Since the \textsf{Reduce} algorithm
comprises the main component of the Zig-zag Sort algorithm, let us begin
our detailed discussion of the correctness of Zig-zag Sort by
establishing essential properties of this algorithm.

\begin{theorem}
\label{thm:attenuate}
Given an $\alpha$-halver procedure, \textsf{Halver},
for $\alpha\le 1/6$,
which operates on arrays whose size, $n$, is a power of $2$, then
\textsf{Reduce} is a $(\delta,\,5/6)$-attenuator,
for $\delta \ge \alpha + \alpha \delta + \delta^2$.
\end{theorem}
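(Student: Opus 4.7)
The plan is to invoke the 0-1 principle and reduce to 0/1 arrays, then proceed by strong induction on $n$, the common size of $A$ and $B$. Fix $k\le 5n/6$ and call the $k$ largest elements of $A\cup B$ the 1's; let $k_1$ be the number of 1's in $A$ at the start of Reduce. It suffices to show that at the end of Reduce, $A$ contains at most $\delta k_1$ 1's, since the symmetric bound on smallest-$k$ elements in $B$ follows by an entirely parallel argument using the dual part of the $\alpha$-halver definition. The base case $n\le 8$ is immediate because Reduce simply sorts.

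For the inductive step, I decompose $A$ at the end of Reduce as $A^{(1)}_1\cup A^{(2)}_1\cup A^{(2)}_2$, the three subarrays into which $A$ has been partitioned by the two splitting steps inside Attenuate. Let $k_1^{(0)}$ be the 1-count in $A$ after the initial Halver$(A,B)$ in Reduce: monotonicity of compare-exchange (a 1 can only move from $A$ to $B$, never the reverse) gives $k_1^{(0)}\le k_1$, while the $\alpha$-halver property gives $k_1^{(0)}\le\alpha k\le (1/6)(5n/6)=5n/36<n/2$. This second estimate is what will keep every subsequent halver comfortably within its legal $m\le n/2$ range, so the Overflow Lemma is not needed anywhere in the argument.

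The key observation is that each halver supplies two upper bounds on the 1-count it leaves in its first argument: the $\alpha$-halver bound (at most $\alpha m$ 1's, where $m$ is the pair's pre-halver 1-count) and the monotonicity bound (1's in the first argument cannot grow). Using the $\alpha$-halver bound at Halver$(A^{(1)}_1,A^{(1)}_2)$ and at Halver$(A^{(2)}_1,A^{(2)}_2)$, the monotonicity bound at Halver$(A^{(1)}_2,B^{(1)}_1)$ and at Halver$(A^{(2)}_2,B^{(2)}_1)$, and the inductive hypothesis at each of the two recursive Attenuate calls (whose preconditions $k'\le 5n/12$ and $k''\le 5n/24$ follow routinely from $k_1^{(0)}\le 5n/36$), one obtains: the 1-count in $A^{(1)}_1$ is at most $\alpha k_1^{(0)}\le\alpha k_1$; the 1-count in $A^{(2)}_1$ is at most $\alpha\cdot\delta k_1^{(0)}\le\alpha\delta k_1$; and the 1-count in $A^{(2)}_2$ is at most $\delta\cdot\delta k_1^{(0)}\le\delta^2 k_1$. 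Summing the three pieces gives a total 1-count in $A$ of at most $(\alpha+\alpha\delta+\delta^2)\,k_1\le\delta k_1$, closing the induction.

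The hard part will be the discipline of choosing \emph{which} of the two halver bounds to invoke at each intermediate step. Applying the $\alpha$-halver bound where one should instead use monotonicity introduces a stray ``$\alpha k$'' term that cannot be re-expressed in terms of $k_1$ and spoils the attenuator inequality, whereas applying monotonicity at a place where a genuine multiplicative $\alpha$-reduction is available loses the factor outright. The sequence of choices above is essentially the only one that yields exactly the three terms $\alpha k_1$, $\alpha\delta k_1$, and $\delta^2 k_1$. A secondary, more routine subtlety is verifying the two recursive preconditions $k'\le 5n/12$ and $k''\le 5n/24$, for which the parameter choices $\lambda=5/6$ and $\alpha\le 1/6$ are precisely tuned.
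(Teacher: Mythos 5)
Your accounting of the three pieces of $A$ at the end of \textsf{Reduce} matches the paper's proof exactly: the same decomposition into $A^{(1)}_1$, $A^{(2)}_1$, $A^{(2)}_2$, the same alternation between the $\alpha$-halver bound and monotonicity, the same use of the inductive attenuator guarantee at the two recursive calls, and the same closing observation that the final count is measured against the pre-\textsf{Halver} count $k_1$. That part is fine. The genuine gap is in the step you dismiss as ``routine'': verifying the precondition that at most $5n/12$ ones enter the first recursive call \textsf{Attenuate}$(A^{(1)}_2,B^{(1)}_1)$. You claim this follows from $k_1^{(0)}\le 5n/36$, i.e.\ from the bound on the number of 1's in $A$, and you assert that the Overflow Lemma is never needed. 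But the ones passed to that recursive call live in $A^{(1)}_2\cup B^{(1)}_1$, and the dominant contribution comes from $B^{(1)}_1$, which is controlled by $k_2$ (the 1's in $B$), not by $k_1^{(0)}$. Since $k$ can be as large as $5n/6$, we can have $k_2>n/2$, in which case \textsf{Halver}$(B^{(1)}_1,B^{(1)}_2)$ \emph{cannot} push all the 1's of $B$ into $B^{(1)}_2$: at least $k_2-n/2$ of them necessarily remain in $B^{(1)}_1$ (roughly $n/3$ plus an $\alpha$-error term when $k_2$ approaches $5n/6$). So the quantity you need to bound is not small a priori, it genuinely approaches the $5n/12$ threshold, and showing it stays under that threshold is exactly where the hypothesis $\alpha\le 1/6$ does its work.

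This verification is in fact the bulk of the paper's proof: it is stated there as a separate Claim and proved by a case analysis on $k_2$ (the cases $k_2\le n/2$, $n/2<k_2\le 2n/3$, and $2n/3<k_2\le 5n/6$), using overflow-style counting of how many 1's must spill into $B^{(1)}_1$ and how many 1's in $A$ can accompany them, with the three cases requiring $\alpha\le 1/4.5$, $\alpha\le 1/6$, and $\alpha\le 1/4$ respectively; the analogous bound $5n/24$ for the second recursive call is then argued by observing that the same pattern of halvers repeats at half the scale. Your proposal has the right skeleton but omits this argument entirely, and the justification you offer in its place (a bound on the 1's in $A$ alone, with the explicit assertion that overflow reasoning is unnecessary) is not just incomplete but points at the wrong quantity, so as written the induction does not close.
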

\begin{proof}
W.l.o.g., let us analyze the number of 1's that end up in $A$; the arguments
bounding the number of 0's that end up in $B$ are similar.
Let $k\le (5/6)n$ 
denote the number of $1$'s in $A\cup B$. Also, just after the first
call to \textsf{Halver} in \textsf{Reduce}, let $k_1$ denote
the number of 1's in $A$ and let $k_2$ denote the number in $B$, so 
$k=k_1+k_2$. 
Moreover, because we preface our call to \textsf{Attenuate} in
\textsf{Reduce} with the
above-mentioned $\alpha$-halver operation, $k_1\le \alpha k\le (5\alpha/6)n$.
Also, note that if we let $k_1'$ denote 
the number of 1's in $A$ before we perform
this first $\alpha$-halver operation, 
then $k_1\le k_1'$, since any $\alpha$-halver
operation with $A$ as the first argument 
can only decrease the number of 1's in~$A$.
Note that if $n\le 8$, then 
we satisfy the claimed bound, since we reduce the number of 1's in $A$ to $0$
in this case.

Suppose, inductively, that
the recursive calls to \textsf{Attenuate} perform $(\delta,5/6)$-attenuator
operations, under the assumption that the number of 1's passed 
to the first recursive call in \textsf{Attenuate} is
at most $(5/6)n/2$ 
and that there are at most $(5/6)n/4$ passed to the second.
If $\alpha\le 1/6$, then the
results of lines~4 and~6 give us $D(A_1^{(1)})\le \alpha k_1$ and 
$D(A_2^{(1)})\le k_1$.
Thus, inductively, after the first call to \textsf{Attenuate}, we
have $D(A_2^{(1)})\le \delta k_1$. 
The results
of lines~9 and~11 give us
$D(A_1^{(2)})\le \alpha\delta k_1$ and $D(A_2^{(2)})\le \delta k_1$.
Thus, inductively, after the second call to \textsf{Attenuate}, we
have $D(A_2^{(2)})\le \delta^2 k_1$. 
Therefore, if we can show that the number of 1's passed to each call of
\textsf{Attenuate} is 5/6 of the size of the input subarrays, then we
will establish the lemma, provided that
\[
\delta \ge
\alpha + \alpha \delta + \delta^2.
\]
To bound the number of 1's passed to each recursive call to
\textsf{Attenuate}, we establish the following claim.

\textbf{Claim:} 
\textit{
The number of 1's passed to the first recursive call 
in \textsf{Attenuate} is at most $5n/12$.  
}

Since the structure of the \textsf{Attenuate} algorithm
involves the same kinds of $\alpha$-halver operations from the first recursive
call to the second, this will also imply that the number of 1's passed to the
second recursive call is at most $5n/24$, provided it holds for the first call.
To keep the constant factors reasonable, we distinguish three cases to prove
the above claim:

\begin{enumerate}
\item
Suppose $k_2\le n/2$. 
Since $k_1\le \alpha k$, in this case, $k \le n/(2-2\alpha)$,
since $k=k_1+k_2\le \alpha k + n/2$.
Here, the number of 1's passed to the recursive call is at most 
$2\alpha k + \alpha n/2$, since we start with $k_1\le \alpha k$ and
$k_2\le k$, and
\textsf{Halver}$(B^{(1)}_1,B^{(1)}_2)$ reduces the number of 1's in
$B^{(1)}_1$ in this case to be at most $\alpha k + \alpha n/2$,
by Lemma~\ref{lem:overflow}.
Thus, since,
in this case,
\[
2\alpha k + \alpha n/2 \le \alpha n/(1-\alpha) + \alpha n/2 ,
\]
the number of 1's passed to the recursive call is at most $5n/12$
if $\alpha\le 1/4.5$.

\item
Suppose $n/2 < k_2\le 2n/3$. 
Since $k_1\le \alpha k$, in this case, 
$k \le 2n/(3-3\alpha)$,
since $k=k_1+k_2\le \alpha k + 2n/3$.
Here, the number of 0's in
$B$ is $n-k_2<n/2$; hence,
the number of 0's in $B^{(1)}_2$ is at most $\alpha(n-k_2)$,
which means that the number of 1's in $B^{(1)}_2$ is at least
$n/2\,-\,\alpha(n-k_2)$, and this, in turn, implies that the
number of 1's in $B^{(1)}_1$ is at most $k_2-n/2+\alpha(n-k_2)$.
Thus, the number of 1's in the first recursive call is at most 
$k-n/2+\alpha(n-k_2)$. 
That is, it has at most $2n/(3-3\alpha) -n/2+\alpha n/2$
1's in total, which is at most $5n/12$ if $\alpha\le 1/6$.

\item
Suppose $2n/3 < k_2\le 5n/6$. 
Of course, we also know that $k\le 5n/6$ in this case.
Here, the number of 0's in $B$ is $n-k_2<n/3$; hence,
the number of 0's in $B^{(1)}_2$ is at most $\alpha(n-k_2)$,
which means that the number of 1's in $B^{(1)}_2$ is at least
$n/2\,-\,\alpha(n-k_2)$, and this, in turn, implies that the
number of 1's in $B^{(1)}_1$ is at most $k_2-n/2+\alpha(n-k_2)$.
Thus, the number of ones in the first recursive call is at most 
$k-n/2+\alpha(n-k_2)$. 
That is, it has at most $5n/6-n/2+\alpha n/3$
1's in total, which is at most $5n/12$ if $\alpha\le 1/4$.
\end{enumerate}
Thus, we have established the claim, which in turn, establishes that
\textsf{Reduce} is a $(\delta,5/6)$-attenuator, for
$\delta \ge \alpha + \alpha\delta + \delta^2$, assuming $\alpha\le 1/6$, 
since $k_1\le k_1'$.
\end{proof}

So, for example,
using a $(1/15)$-halver as the \textsf{Halver} procedure
implies that \textsf{Reduce} is
a $(1/12,5/6)$-attenuator.
%
In addition, we have the following.

\begin{theorem}
\label{thm:halver}
Given an $\alpha$-halver procedure, \textsf{Halver},
for $\alpha\le 1/6$,
which operates on arrays whose size, $n$, is a power of $2$, then
the \textsf{Reduce} algorithm is an $(\alpha\delta, 5/6)$-halver,
for $\delta\ge \alpha + \alpha\delta + \delta^2$.
\end{theorem}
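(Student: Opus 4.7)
My plan is to read the halver bound directly off the proof of Theorem~\ref{thm:attenuate}, strengthening only its final inequality. By the 0-1 principle I restrict to 0-1 inputs, and by symmetry I focus on bounding, for an arbitrary fixed $k \le (5/6)n$, the number of 1's in $A$ among the top $k$ elements of $A \cup B$ after \textsf{Reduce}; the dual bound on 0's in $B$ will follow by the same argument with the roles of $A$ and $B$ (and of ``largest'' and ``smallest'') swapped.

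Let $k_1'$ denote the count of top-$k$ elements in $A$ at the start of \textsf{Reduce}, and let $k_1$ denote the same count immediately after the initial call \textsf{Halver}$(A,B)$ inside \textsf{Reduce}. Because \textsf{Halver} is an $\alpha$-halver and $k \le (5/6)n < n$, the halver definition gives $k_1 \le \alpha k$. The remainder of \textsf{Reduce} is \textsf{Attenuate}$(A,B)$, and the analysis in the proof of Theorem~\ref{thm:attenuate}---under the hypotheses $\alpha \le 1/6$ and $\delta \ge \alpha + \alpha\delta + \delta^2$---shows that after this \textsf{Attenuate} call the count of top-$k$ elements in $A$ is at most $\delta k_1$. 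To obtain the attenuator conclusion of Theorem~\ref{thm:attenuate}, one then writes $\delta k_1 \le \delta k_1'$; for the present halver claim I instead pessimize in the other direction, writing $\delta k_1 \le \delta \cdot \alpha k = \alpha\delta k$. This is exactly the $(\alpha\delta,\,5/6)$-halver inequality for the ``top $k$ in $A$'' side, and the symmetric argument delivers the ``bottom $k$ in $B$'' side.

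The main obstacle is purely bookkeeping: I must verify that every quantitative step inside the proof of Theorem~\ref{thm:attenuate} is actually driven by the post-halver count $k_1$ and not by the pre-halver count $k_1'$, so that the stronger substitution $\delta k_1 \le \alpha\delta k$ is legitimate at the end. The critical place to check is the three-case Claim in that proof, which bounds the number of 1's handed to the first recursive \textsf{Attenuate} call by $5n/12$: that Claim is stated in terms of $k_1 \le \alpha k$ (not $k_1'$), and all three of its cases go through under the single hypothesis $\alpha \le 1/6$. Once this inspection is made, no new calculations are required, and the $(\alpha\delta,\,5/6)$-halver property of \textsf{Reduce} drops out as an immediate corollary of the work already done for Theorem~\ref{thm:attenuate}.
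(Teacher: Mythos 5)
Your proposal is correct and follows essentially the same route as the paper: apply the initial \textsf{Halver}$(A,B)$ to get at most $\alpha k$ of the top $k$ in $A$, then use the attenuation factor $\delta$ from Theorem~\ref{thm:attenuate} to conclude the bound $\alpha\delta k$. Your extra bookkeeping check---that the analysis inside Theorem~\ref{thm:attenuate} is driven by the post-halver count $k_1\le\alpha k$ rather than $k_1'$---is a valid (and slightly more careful) justification of the step the paper states at the level of the attenuator property.
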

\begin{proof}
Let us analyze the number of 1's that may remain in the first
array, $A$, in a call to \textsf{Reduce}$(A,B)$, as the method for
bounding the number of 0's in $B$ is similar.
After the first call to the \textsf{Halver} procedure, the number 
of 1's in $A$ is at most $\alpha k$, where $k\le (5/6)n$ is the
number of 1's in $A\cup B$.
Then, since the \textsf{Attenuate} algorithm prefaced by an
$\alpha$-halver is a $(\delta,5/6)$-attenuator,
by Theorem~\ref{thm:attenuate}, 
it will further reduce the number of 1's in $A$ to be at most
$\alpha\delta k$, where $\delta\ge \alpha+\alpha\delta+\delta^2$.
Thus, this process is an
$(\alpha\delta,5/6)$-halver.
\end{proof}

So, for example, if we construct the \textsf{Halver} procedure
to be a $(1/15)$-halver, then \textsf{Reduce} is a $(1/180,5/6)$-halver,
by Theorems~\ref{thm:attenuate} and~\ref{thm:halver}.
In addition, we have the following.

\begin{theorem}
\label{thm:better-halver}
Given an $\alpha$-halver procedure, 
\textsf{Halver},
for $\alpha\le 1/8$,
which operates on arrays whose size, $n$, is a power of $2$, then,
when prefaced by an $\alpha$-halver operation,
the \textsf{Attenuate} algorithm is an $\epsilon$-halver
for $\epsilon = \alpha^2 (\lceil \log (1/\alpha)\rceil+3)$.
\end{theorem}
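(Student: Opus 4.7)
By the 0-1 principle it suffices to work with binary inputs having exactly $k$ ones and $2n-k$ zeros, and to bound the number of ones ending up in $A$ after the preface \textsf{Halver} followed by \textsf{Attenuate} (which together constitute \textsf{Reduce}); the dual bound on zeros in $B$ then follows by a symmetric argument. My plan is to split by the magnitude of $k$. The subrange $k \le (5/6)n$ falls immediately under Theorem~\ref{thm:halver}, which gives at most $\alpha\delta k$ ones in $A$; one then checks $\alpha\delta \le \epsilon$ by solving $\delta^2+(\alpha-1)\delta+\alpha \le 0$ and noting that its smaller root satisfies $\delta \le 2\alpha$ when $\alpha \le 1/8$, hence $\alpha\delta \le 2\alpha^2 \le \alpha^2(\lceil\log(1/\alpha)\rceil+3) = \epsilon$ (since $\lceil\log(1/\alpha)\rceil+3 \ge 6$ in this regime).

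For the residual range $(5/6)n < k \le n$ I would argue by induction on $n$, the base case $n \le 8$ being trivial since \textsf{Attenuate} sorts. In the inductive step, I observe that steps 6--7 of \textsf{Attenuate} form a call to \textsf{Reduce} on $(A^{(1)}_2, B^{(1)}_1)$ of size $n/2$ and steps 11--12 form a call to \textsf{Reduce} on $(A^{(2)}_2, B^{(2)}_1)$ of size $n/4$, so by the inductive hypothesis each of these is an $\epsilon$-halver. I would then decompose the final array $A$ as $A^{(1)}_1 \cup A^{(2)}_1 \cup A^{(2)}_2$ and bound the ones in each piece separately: $A^{(1)}_1$ is clamped by the step-4 halver to at most $\alpha\cdot(\text{ones in }A)\le \alpha^2 k$; $A^{(2)}_1$ is clamped by the step-9 halver to $\alpha$ times the ones remaining in $A^{(1)}_2$ after step 7; and $A^{(2)}_2$ is controlled directly by the inductive $\epsilon$-halver property of steps 11--12 applied to the sub-input prepared by steps 8--11. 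Wherever the number of ones in a sub-input exceeds the size of a single subarray being halved---as happens on the $B$-side halver in step 5 and on the step-6--7 sub-\textsf{Reduce} once $k$ is near $n$---I would invoke the Overflow Lemma (Lemma~\ref{lem:overflow}) to salvage a quantitative bound.

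The main obstacle lies entirely in the residual regime: after the preface halver the number of ones in $B$ may exceed $n/2$, so several of the intermediate halver bounds as well as the inductive $\epsilon$-halver bound for steps 6--7 must be applied through the Overflow Lemma. The hypothesis $\alpha \le 1/8$ is needed precisely to keep the overflow corrections of the form $(1-\alpha)(k'-n/2)$ from dominating the $\alpha$-factor reduction at each halver. The $\lceil\log(1/\alpha)\rceil$ summand inside $\epsilon$ should emerge from accumulating one $\alpha^2 k$ contribution at each of roughly $\log(1/\alpha)$ recursion levels during which the active middle subproblem still has more than a $5/6$-fraction of ones relative to its size, while the additive $+3$ absorbs the initial overflow preamble together with the bounded geometric tail contributed by the deeper recursive levels, at which point Theorem~\ref{thm:halver} takes over and completes the accounting.
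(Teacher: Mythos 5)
Your easy regime is fine: for $k\le(5/6)n$, invoking Theorem~\ref{thm:halver} and checking that the smaller root of $\delta^2+(\alpha-1)\delta+\alpha\le 0$ satisfies $\delta\le 2\alpha$ (indeed $2\alpha$ satisfies the inequality whenever $\alpha\le 1/6$), so that $\alpha\delta\le 2\alpha^2\le\epsilon$, is a correct and clean reduction. But the theorem lives or dies in the residual regime $(5/6)n<k\le n$, and there your proposal has a genuine gap: the quantitative accounting that would produce the specific constant $\epsilon=\alpha^2(\lceil\log(1/\alpha)\rceil+3)$ is never carried out, and the mechanism you offer for the $\lceil\log(1/\alpha)\rceil$ term does not work as described. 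You claim it comes from ``roughly $\log(1/\alpha)$ recursion levels during which the active middle subproblem still has more than a $5/6$-fraction of ones.'' However, when $k$ is close to $n$, after the preface halver $B$ carries nearly $k$ ones, so the middle pair $(A^{(1)}_2,B^{(1)}_1)$ receives about $n/2+\alpha a_0$ ones (where $a_0\le\alpha k$ is what remains in $A$), i.e., it is saturated---more ones than the size of a single subarray---and the same saturation recurs at \emph{every} level down to the base case. Counting ``dense levels'' therefore gives $\Theta(\log n)$, not $\lceil\log(1/\alpha)\rceil$; any valid accounting must instead exploit the geometric decay of the subproblem sizes together with careful control of the overflow excesses ($k'-n/2$, $k''-n/4$, \dots) via Lemma~\ref{lem:overflow}, and it is exactly this computation that determines whether the final inequality closes at $\alpha^2(\lceil\log(1/\alpha)\rceil+3)k$ (and for which $\alpha$). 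Your induction set-up (decompose $A=A^{(1)}_1\cup A^{(2)}_1\cup A^{(2)}_2$, clamp the frozen pieces by $\alpha$-halvers, apply the inductive $\epsilon$-halver bound with overflow to the two recursive \textsf{Reduce} calls) is a plausible skeleton, but as written the closing inequality is asserted, not proved, and the heuristic justification given for its most delicate term is incorrect.

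For comparison, the paper does not re-derive this bound at all: it cites Manos~\cite{Manos1999}, who proves that a particular network built from $\alpha$-halvers is an $\epsilon$-halver with exactly this $\epsilon$, observes that every $\alpha$-halver call in Manos' network is also made in \textsf{Reduce}, and argues that the additional comparators in \textsf{Reduce} (within $A$, within $B$, or from $A$ to $B$) can only leave the number of 1's in $A$ unchanged or decrease it, so Manos' bound transfers. In effect, the hard case you are attempting to prove from scratch is precisely Manos' theorem; either you import it as the paper does, or you must actually perform the level-by-level accounting you sketched, which your proposal currently does not do.
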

\begin{proof}
Manos~\cite{Manos1999}
provides an algorithm for leveraging an $\alpha$-halver to construct
an $\epsilon$-halver,
for 
\[
\epsilon = \alpha^2 (\lceil \log (1/\alpha)\rceil+3),
\]
and every call to an $\alpha$-halver made in the
algorithm by Manos is also made in \textsf{Reduce}.
In addition, all the other calls to the $\alpha$-halver procedure made in 
\textsf{Reduce} either keep the number of 1's in $A$ unchanged or
possibly make it even smaller, since they involve compare-exchanges
between subarrays of $A$ and $B$ or they involve compare-exchanges done
after the same ones as in Manos' algorithm (and the compare-exchanges in
the \textsf{Reduce} algorithm never involve zig-zag swaps). 
Thus, the bound derived by
Manos~\cite{Manos1999} for his algorithm also applies to \textsf{Reduce}.
\end{proof}

So, for example, if we take $\alpha=1/15$, 
then \textsf{Reduce} is a $(1/32)$-halver.

\subsection{The Correctness of the Main Zig-zag Sort Algorithm}
The main theorem that establishes the correctness of the Zig-zag Sort
algorithm, given in Figure~\ref{fig:zigzag}, is the following.

\begin{theorem}
\label{thm:final}
If it is implemented using a linear-time $\alpha$-halver, \textsf{Halver}, for
$\alpha\le 1/15$,
Zig-zag Sort correctly sorts an array of $n$ comparable
items in $O(n\log n)$ time.
\end{theorem}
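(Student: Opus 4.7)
The running time claim is immediate from the structure of the algorithm and from the analysis of \textsf{Reduce} already established: the outer loop has $k = \log n$ iterations, and in iteration $j$ there are $3\cdot 2^j - 2$ calls to \textsf{Reduce} on subarrays of size $n/2^j$, each running in linear time, for total work $O(n)$ per iteration and $O(n\log n)$ overall. So the entire effort is in establishing correctness, and by the 0-1 principle we may restrict attention to binary inputs and track the dirtiness function $D(\cdot)$.

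My plan is to argue by induction on the phase index $j$, maintaining an invariant describing the distribution of dirtiness among the subarrays $A^{(j)}_1,\dots,A^{(j)}_{2^j}$ at the end of phase $j$. Specifically, let $m_j$ denote the index of the unique subarray at phase $j$ whose sorted contents are mixed (if no such subarray exists the sort is complete). The invariant I would maintain is that after phase $j$, the dirtiness of subarray $A^{(j)}_i$ decays geometrically with its distance $|i-m_j|$ from the boundary---more precisely, $D(A^{(j)}_i) \le c\,\rho^{|i-m_j|}\,(n/2^j)$ for some constants $c>0$ and $\rho<1$ chosen in terms of $\alpha$, $\beta$, $\delta$, and $\epsilon$. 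The inductive base is trivial, and because $\rho<1$ the invariant at phase $k=\log n$ forces every subarray to have dirtiness $0$, which is precisely the sorted condition.

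The inductive step is where the three properties of \textsf{Reduce} (being simultaneously an $\epsilon$-halver, a $(\beta,5/6)$-halver, and a $(\delta,5/6)$-attenuator) must be combined. The splitting step halves each subarray and applies \textsf{Reduce}, which by the $\epsilon$-halver property ensures the new subarrays inherit a controlled amount of internal dirtiness from their parent. The outer zig and outer zag passes then function like a tuned odd-even transposition: the inner zig-zag swap deliberately concentrates all the ``wrong'' items (1s in a subarray that should be 0s, and vice versa) into one of the two argument arrays of the following \textsf{Reduce} call, which is precisely the regime where the attenuator bound $\delta$ applies and multiplies the dirtiness by $\delta$ at each step. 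This multiplicative contraction, applied $\Theta(2^j)$ times per phase in each direction, is what produces the geometric decay $\rho^{|i-m_j|}$ in the invariant. The halver bounds ($\alpha$ and $\beta$) are needed for the subarrays near the boundary subarray $m_j$, where the dirtiness is too large to satisfy the hypothesis of the attenuator (that the number of misplaced items is at most a $(5/6)$-fraction).

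The main obstacle will be choosing the constants so the recurrence actually closes: the attenuator guarantee only fires when the count of misplaced items in the relevant array is at most $(5/6)(n/2^j)$, and so for subarrays near $m_j$ one must fall back to the $(\beta,5/6)$-halver or, in the worst case, the $\epsilon$-halver via the Overflow Lemma. The heart of the proof is therefore a careful parameter balance showing that $\alpha = 1/15$ (giving $\delta = 1/12$, $\beta = 1/180$, $\epsilon = 1/32$ from Theorems~\ref{thm:attenuate}, \ref{thm:halver}, and~\ref{thm:better-halver}) is strong enough that the contraction factor $\rho$ one obtains from the combined halver/attenuator analysis is strictly less than $1$, and that the boundary ``leakage'' from the $O(1)$ subarrays near $m_j$ that receive the weaker halver bound does not overwhelm the geometric decay. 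Once these constants are verified and the potential-function invariant is shown to be preserved from phase $j$ to phase $j+1$, the theorem follows by taking $j = \log n$.
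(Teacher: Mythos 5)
Your high-level framework (0--1 principle, a per-subarray dirtiness potential, induction over the $\lceil\log n\rceil$ phases, and combining the $\epsilon$-halver, $(\beta,5/6)$-halver, and $(\delta,5/6)$-attenuator roles of \textsf{Reduce}, with the stated parameter values) matches the paper, and your running-time argument is fine. But the specific invariant you propose---$D(A^{(j)}_i) \le c\,\rho^{|i-m_j|}\,n/2^j$ with $\rho<1$, i.e., dirtiness decaying geometrically in the \emph{index distance} from the boundary subarray---is precisely the candidate potential the paper identifies as unworkable, and the inductive step fails for it. The reason is quantitative: when you pass from phase $j$ to phase $j+1$, every subarray is split in half, so a subarray at index distance $d$ from the cross-over becomes two subarrays at distance roughly $2d$; your invariant would then demand dirtiness at most on the order of $\rho^{2d} n_j/2$, while the zig/zag passes only attenuate dirtiness by a fixed constant factor per \textsf{Reduce} call (e.g., the Low-side Zig bound is of the form $D' \le \delta\,D(\text{right neighbor})$). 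Closing the recurrence would therefore require $\delta \lesssim \rho^{d}$ uniformly in $d$, which no fixed $\delta<1$ satisfies. In other words, the contraction available per phase is a constant factor, but your invariant asks the far subarrays to improve by a factor that grows exponentially with their distance, because that distance doubles at every split.

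The paper's actual invariant avoids this by measuring decay not in index distance but in the quantity $d_{i,j}$, the number of iterations since the subarray last had an ancestor intersecting the uncertainty interval (which increases by exactly one per phase): after the splitting step, $D(A^{(j)}_i) \le 4^{d_{i,j}}\delta^{d_{i,j}-1}\beta n_j$ for subarrays outside the uncertainty interval, together with the separate bound $n_j/6$ for the two subarrays straddling it. Since $d_{i,j}$ grows by one per phase while $n_j$ halves, re-establishing the invariant only requires a constant-factor gain per phase, which is exactly what the attenuator provides (with the $4^{d_{i,j}}$ term absorbing the factor-of-2 loss from splitting and the shoveling error terms $\delta^{i-m_1-1}\beta n_j$ from the inner zig-zag step). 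Note also that the paper must track \emph{two} boundary subarrays (the uncertainty interval always meets exactly two), with a delicate case analysis (straddling/neighbor zig and zag lemmas, using the Overflow Lemma when the misplaced-element count exceeds $n_j$) to keep their dirtiness at $n_j/6$ and to hand off bounds like $n_{j+1}/6$ or $4\beta n_{j+1}$ to their children depending on where $K \pm n_j/4$ falls; your single mixed subarray $m_j$ and the informal ``leakage'' remark do not capture this, and it is where most of the actual work in the paper's proof lies. So as written, your induction hypothesis cannot be propagated, and the proof does not go through without replacing it by a potential of the paper's type (or some other sub-geometric-in-distance scheme).
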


The details of the proof of Theorem~\ref{thm:final}
are given in the appendix, but let us nevertheless provide a sketch
of the main ideas behind the proof here.

Recall that in each iteration, $j$, of Zig-zag Sort, we divide the
array, $A$, into $2^j$ subarrays, $A_1^{(j)},\ldots,A_{2^j}^{(j)}$. 
Applying the 0-1 principle, let us
assume that $A$ stores some number, $K$, of $0$'s and $n-K$ $1$'s;
hence, in a final sorting 
of $A$, a subarray, $A_i^{(j)}$, should contain all $0$'s if 
$i<\lfloor K/2^j\rfloor$ and all $1$'s if $i>\lceil K/2^j\rceil$. 
Without loss of generality, let us assume $0<K<n$, and let us
define the index $K$ to be the \emph{cross-over} point in $A$, 
so that in a final sorting of $A$, we should have $A[K]=0$ and 
$A[K+1]=1$, by the 0-1 principle.

The overall strategy of our proof of correctness is to define a
set of potential functions upper-bounding the dirtiness 
of the subarrays in iteration $j$ while
satisfying the following constraints:
\begin{enumerate}
\item
The potential for any subarray, other than the one containing the
cross-over point, should be less than its size,
with the potential of
any subarray being a function of its distance from the cross-over
point.
\item
The potential for any subarray should be reduced in 
an iteration of Zig-zag Sort by an amount sufficient for
its two ``children'' subarrays to satisfy their dirtiness potentials
for the next iteration.
\item
The total potential of all subarrays that should contain only $0$'s
(respectively, $1$'s) should be bounded by the size of a single
subarray. 
\end{enumerate}
The first constraint ensures that $A$ will be sorted when we are
done, since the size of each subarray at the end is $1$.
The second constraint is needed in order to maintain bounds on the
potential functions from one iteration to the next.
And the third constraint is needed in order to argue that the
dirtiness in $A$ is concentrated around the cross-over point.

Defining a set of potential functions that satisfy these constraints
turned out to be
the main challenge of our correctness proof, and there are several
candidates that don't seem to work.
For example, dirtiness
bounds as an exponential function of distance 
from the cross-over 
(in terms of the number of subarrays) 
seem inappropriate,
since the capacity of the \textsf{Reduce} algorithm to move
elements is halved with each iteration, while distance from the
cross-over point is doubled, which limits our ability to
reason about how much dirtiness is removed in an outer zig or zag
phase.  
Alternatively, dirtiness bounds that are linear functions of distance
from the cross-over seem to leave too much dirtiness in ``outer''
subarrays, thereby compromising arguments that $A$ will become
sorted after $\lceil\log n\rceil$ iterations.
The particular set of potential functions that we use in our proof of
correctness, instead, can be seen as a compromise between these two
approaches.

So as to prepare for defining
the particular set of \emph{dirtiness invariants} 
for iteration $j$ that we will
show inductively holds 
after the splitting step in each iteration $j$, let us introduce a
few additional definitions.
Define the \emph{uncertainty interval} to be the set of indices
for cells in $A$
with indices
in the interval,
\[
 [K-n_j/2,\, K+1+n_j/2],
 \]
where $n_{j}=n/2^{j}$ is the size of each subarray, $A^{(j)}_i$, and
$K$ is the cross-over point in $A$.
Note that this implies that there are exactly two subarrays that
intersect the uncertainty interval in iteration $j$ of Zig-zag Sort.
In addition, for any subarray, $A^{(j)}_i$, define $d_{i,j}$ to be the 
number of iterations since this subarray has had an ancestor that was
intersecting the uncertainty interval for that level.
Also,
let $m_0$ denote the smallest index, $i$, such that $A^{(j)}_{i}$ has a cell
in the uncertainty interval and 
let $m_1$ denote the largest index, $i$, such that $A^{(j)}_{i}$ has a cell
in the uncertainty interval (we omit an implied dependence on $j$ here).
Note that these indices are defined for the sake of simplifying our
notation, since $m_1=m_0+1$.

Then,
given that the \textsf{Reduce} algorithm is 
simultaneously an $\epsilon$-halver, a
$(\beta,5/6)$-halver, and 
a $(\delta,5/6)$-attenuator, 
with the parameters, $\epsilon$, $\beta$, and $\delta$, depending on
$\alpha$, the parameter for the $\alpha$-halver, \textsf{Halver}, as
discussed in the previous section,
our potential functions and dirtiness invariants are as follows:
\begin{enumerate}
\item
After the splitting step,
for any subarray, $A^{(j)}_i$, for $i\le m_0-1$ or $i\ge m_1+1$, 
\[
D(A^{(j)}_i) \le 4^{d_{i,j}}\delta^{d_{i,j}-1} \beta n_j.
\]
\item
If the cross-over point, $K$, indexes a cell in $A^{(j)}_{m_0}$, then
$ D(A^{(j)}_{m_1}) \le n_j/6$.
\item
If the cross-over point, $K$, indexes a cell in $A^{(j)}_{m_1}$, then
$ D(A^{(j)}_{m_0}) \le n_j/6$.
\end{enumerate}

Our proof of correctness, then,
is based on arguing how the outer-zig and outer-zag phases of Zig-zag
Sort reduce the dirtiness of each subarray sufficiently to allow the
dirtiness invariants to hold for the next iteration.
Intuitively, the main idea of the arguments is to show that dirtiness
will continue to be
concentrated near the cross-over point, because the outer-zig
phase ``pushes'' $1$'s out of subarrays that should contain all $0$'s
and the outer-zag phase pushes $0$'s out of subarrays that should
contain all $1$'s.
This pushing intuition also provides the motivation behind the
inner zig-zag step, since it provides a way to ``shovel'' $1$'s
right in the outer-zig phase and shovel $0$'s left in the outer-zag
phase, where we view the subarrays of $A$ as being indexed
left-to-right.
One complication in our proof of correctness
is that this ``shoveling'' introduces some error terms in our bounds for
dirtiness during the outer-zig and outer-zag phases, 
so some care is needed to argue that these error terms
do not overwhelm our desired invariants.
The details are given in the appendix.

\section{Some Words About Constant Factors}
\label{sec:constants}
In this section, we discuss the constant factors in the running time
of Zig-zag Sort relative to the AKS sorting network and its variants.

An essential building block for Zig-zag Sort is the
existence of $\alpha$-halvers for moderately small constant values of $\alpha$,
with $\alpha\le 1/15$ being sufficient for correctness,
based on the analysis.
Suppose such an algorithm uses $cn$ compare-exchange operations,
for two subarrays whose combined size is $n$.
Then the number of compare-exchange operations performed
by the \textsf{Attenuate} algorithm is characterized by the
recurrence,
\[
T(n)=T(n/2)+T(n/4)+2.25cn,
\]
where $n$ is the total combined size of the arrays, $A$ and $B$;
hence, $T(n)=9cn$.
Thus, the running time, in terms of compare-exchange operations, 
for \textsf{Reduce}, is $10cn$, which implies that the running time
for Zig-zag Sort, in terms of compare-exchange operations, is 
at most $50cn\log n$.

An algorithm for performing a data-oblivious $\alpha$-halver
operation running in $O(n)$ time
can be built from constructions for bipartite $(\gamma,t)$-expander graphs
(e.g., see~\cite{Ajtai:1983,aks-83,Hoory2006,Xie1998}),
where such a graph,
$G=(X,Y,E)$, has the property that any 
subset $S\subset X$ of size at most $t|X|$ has at least $\gamma|S|$
neighbors in $Y$, and similarly for going from $Y$ to $X$.
Thus, if set $A=X$ and $B=Y$ and we use the edges for
compare-exchange operations, then we can build an $\alpha$-halver
from a $((1-\alpha)/\alpha,\alpha)$-expander graph with $|X|=|Y|=n$.
Also, notice that
such a bipartite graph, $G'$, can be constructed from a non-bipartite
expander graph, $G$, on $n$ vertices,
simply by making two copies, $v'$ and $v''$, in
$G'$, for every vertex $v$ in $G$, and replacing each edge $(v,w)$ in
$G$ with the edge $(v',w'')$.
Note, in addition, that $G$ and $G'$ have the same number of edges.

The original AKS sorting network~\cite{Ajtai:1983,aks-83}
is based on the use of $\epsilon$-halvers for very small
constant values of $\epsilon$ and
is estimated to have a depth of
roughly $2^{100}\log n$, meaning that the running time for simulating
it sequentially would run in roughly $2^{99}n\log n$ time in 
terms of compare-exchange operations (since 
implementing an $\epsilon$-halver sequentially halves the constant 
factor in the depth, given that every compare-exchange is between two items).
Seiferas~\cite{s-snlg-09} describes an improved scheme for building a variant
of the AKS sorting network to have $6.05\log n$ iterations, each seven
(1/402.15)-halvers deep.

By several known results 
(e.g., see~\cite{Hoory2006,Xie1998}), one can construct
an expander graph, as above, which can be used as an
$\epsilon$-halver,
using a $k$-regular graph with 
\[
k = \frac{2(1-\epsilon)(1-\epsilon+\sqrt{1-2\epsilon})}{\epsilon^2}.
\]
So, for example, if $\epsilon=1/15$, then we can construct an
$\epsilon$-halver with $cn$ edges, where $c=392$.
Using this construction results in a running time for Zig-zag
Sort of $19\,600 n\log n$, in terms of compare-exchange operations.
For the sorting network of 
Seiferas~\cite{s-snlg-09}, on the other hand, using such an
$\epsilon$-halver construction,
one can design such a 
(1/402.15)-halver to have degree $k=642\,883$; hence, the running time
of the resulting sorting algorithm would have an upper bound of
$13\,613\,047 n\log n$, in terms of compare-exchange operations.
Therefore, this constructive version of Zig-zag Sort has 
an upper bound that is three orders
of magnitude smaller than this bound for an optimized constructive
version of the AKS sorting network.

There are also non-constructive results for proving the existence of
$\epsilon$-halvers and sorting networks.
Paterson~\cite{p-isn-90} shows non-constructively that $k$-regular
$\epsilon$-halvers exist with 
\[
k = \lceil (2\log \epsilon)/\log (1-\epsilon) + 2/\epsilon - 1\rceil.
\]
So, for example, there is a
$(1/15)$-halver with $54n$ edges,
which would imply a running time of $2700 n\log n$ for Zig-zag Sort.
Using the above existence bound for the (1/402.15)-halvers used 
in the network of Seiferas~\cite{s-snlg-09}, on the other hand, 
results in a running time of $119\,025n\log n$.
Alternatively,
Paterson~\cite{p-isn-90} shows non-constructively that there exists
a sorting network of depth roughly $6100\log n$ and 
Chv{\'a}tal~\cite{Chvatal92} shows that there exists a sorting
network of depth $1830\log n$, for $n\ge 2^{78}$.
Therefore, a non-constructive version of Zig-zag Sort is competitive
with these non-constructive versions of the AKS sorting network,
while also being simpler.

\section{Conclusion}
We have given a simple deterministic data-oblivious sorting
algorithm, Zig-zag Sort, which is a variant of Shellsort running
in $O(n\log n)$ time.
This solves open problems stated by
Incerpi and Sedgewick~\cite{Incerpi1985},
Sedgewick~\cite{s-asra-96}, and
Plaxton {\it et al.}~\cite{pps-ilbs-92,Plaxton1997}.
Zig-zag Sort provides a competitive sequential 
alternative to the AKS sorting network,
particularly for applications where an explicit construction of a data-oblivious
sorting algorithm is desired, such as in applications to oblivious RAM
simulations
(e.g., see~\cite{dmn-pso-11,Eppstein:2010,Goldreich:1996,gm-ppa-11,Goodrich:2012:PGD}).

\subsection*{Acknowledgments}
This research was supported in part by
the National Science Foundation under grants 1011840, 1217322, 0916181
and 1228639, and by the Office of
Naval Research under MURI grant N00014-08-1-1015.
We would like to thank Daniel Hirschberg for several helpful comments
regarding an earlier version of this paper.

{\raggedright 
\bibliographystyle{abbrv} 
\bibliography{refs} 
}

\ifFull\else
\clearpage
\begin{appendix}
\section{The Proof of Theorem~\ref{thm:final}, Establishing the
Correctness of Zig-zag Sort}
As outlined above, our proof of Theorem~\ref{thm:final}, establishing
the correctness of Zig-zag Sort, is based on
our characterizing the dirtiness invariant for the subarrays in $A$, from
one iteration of Zig-zag Sort to the next. Let us therefore assume
we have satisfied the dirtiness invariants for a given
iteration and let us now consider how the compare-exchange operations in a
given iteration impact the dirtiness bounds for each subarray.
We establish such bounds by considering how the \textsf{Reduce}
algorithm impacts various subarrays in the outer-zig and outer-zag
steps, according to the order in which \textsf{Reduce} is called
and the distance of the different subarrays from the cross-over
point.

Recall the potential functions for our dirtiness invariants, with 
$n_j=n/2^j$:
\begin{enumerate}
\item
After the splitting step,
for any subarray, $A^{(j)}_i$, for $i\le m_0-1$ or $i\ge m_1+1$, 
\[
D(A^{(j)}_i) \le 4^{d_{i,j}}\delta^{d_{i,j}-1} \beta n_j.
\]
\item
If the cross-over point, $K$, indexes a cell in $A^{(j)}_{m_0}$, then
\[
D(A^{(j)}_{m_1}) \le n_j/6.
\]
\item
If the cross-over point, $K$, indexes a cell in $A^{(j)}_{m_1}$, then
\[
D(A^{(j)}_{m_0}) \le n_j/6.
\]
\end{enumerate}
One immediate consequence of these bounds is the following.

\begin{lemma}[Concentration of Dirtiness Lemma]
\label{lem:dirt}
The total dirtiness of all the subarrays from the 
subarray, $A^{(j)}_1$, to the subarray $A^{(j)}_{m_0-1}$,
or from the subarray, $A^{(j)}_{m_1+1}$, to the subarray $A^{(j)}_{2^j}$,
after the splitting step, is at most 
\[
\frac{8\beta n_j}{1-8\delta},
\]
provided $\delta<1/8$.
\end{lemma}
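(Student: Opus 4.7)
My plan is to group the subarrays on one side of the cross-over point by the value of $d_{i,j}$, apply the dirtiness invariant term-by-term, and sum a geometric series; the bound for the opposite side follows by a symmetric argument. Concretely, I would work through three steps.

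First, I would establish a combinatorial census: for each integer $d \ge 1$, on each side of the cross-over, the number of level-$j$ subarrays $A^{(j)}_i$ with $d_{i,j} = d$ is at most $2^{d-1}$. The reasoning tracks how the uncertainty interval contracts from one level to the next. At each level $\ell$, the interval intersects exactly two subarrays, one on each side of $K$ (modulo boundary alignment); passing to level $\ell+1$, each of these splits into two children and the (halved) uncertainty interval at level $\ell+1$ retains only the inner child on each side. The outer child is ``ejected'' at level $\ell+1$ with $d_{\cdot,\ell+1} = 1$ and subsequently contributes a complete binary subtree of $2^{j-\ell-1}$ leaves at level $j$, all sharing the same value $d_{i,j} = j-\ell$. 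Reindexing by $d = j-\ell$ yields the claimed count $2^{d-1}$.

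Second, I would apply the dirtiness invariant $D(A^{(j)}_i) \le 4^{d_{i,j}}\delta^{d_{i,j}-1}\beta n_j$ to each such subarray and sum the resulting bounds:
\[
\sum_{d \ge 1} 2^{d-1}\cdot 4^{d}\delta^{d-1}\beta n_j \;=\; 4\beta n_j \sum_{d \ge 1} (8\delta)^{d-1} \;=\; \frac{4\beta n_j}{1-8\delta},
\]
using the hypothesis $\delta < 1/8$ for convergence. This is comfortably within the claimed upper bound $\frac{8\beta n_j}{1-8\delta}$; the slack factor of two is available to absorb the mildly more generous boundary cases in which the uncertainty interval at some level straddles three consecutive subarrays rather than two (so up to $2^{d}$ rather than $2^{d-1}$ subarrays might carry value $d$), as well as any off-by-one arising from the exact placement of $K$.

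The main obstacle I anticipate is the precise combinatorial accounting in the first step: verifying that at each level exactly one child per side is ejected, cleanly handling the boundary alignment of $K$, and ensuring the definition of $d_{i,j}$ is applied consistently so that the exponent of $\delta$ in the invariant lines up with the sizes of the ejected subtrees. Once that census is nailed down, the remaining calculation is a textbook geometric-series bound.
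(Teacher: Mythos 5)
Your proposal is correct and follows essentially the same route as the paper: group the subarrays outside the uncertainty interval by the value $d_{i,j}$, note that the number at depth $k$ is bounded by $2^k$ (the paper uses this total count directly, whereas you refine it to $2^{d-1}$ per side plus a factor-two slack for the case where both ejected children land on the same side), apply the dirtiness invariant to each, and sum the geometric series $\sum_k 2^k 4^k \delta^{k-1}\beta n_j < 8\beta n_j/(1-8\delta)$ using $\delta<1/8$. The only difference is that your census is somewhat more careful than the paper's one-line count, which buys nothing beyond an unneeded constant-factor improvement.
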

\begin{proof}
Note that, since we divide each subarray in two in each iteration
of Zig-zag Sort, there are $2^k$ subarrays, $A_i^{(j)}$, 
with depth $k=d_{i,j}$.
Thus, by the dirtiness invariants, the total
dirtiness of all the subarrays from the first
subarray, $A^{(j)}_1$, to
the subarray $A^{(j)}_{i+1}$,
after the splitting step, for $j<m_0$, is at most 
\begin{eqnarray*}
\sum_{k=1}^{j} 2^k 4^k \delta^{k-1}\beta n_j
&<& 8\beta n_j \sum_{k=0}^{\infty} (8\delta)^{k} \\
&= & \frac{8\beta n_j}{1-8\delta},
\end{eqnarray*}
provided $\delta<1/8$.
A similar argument
establishes the bound for the total dirtiness 
from the subarray, $A^{(j)}_{m_1+1}$, to the subarray $A^{(j)}_{2^j}$.
\end{proof}

\newcommand{\RD}{{\overrightarrow{D}}}
\newcommand{\LD}{{\overleftarrow{D}}}
For any subarray, $B$, of $A$, define
$\RD(B)$ to be the dirtiness of $B$ after 
the outer-zig step and $\LD(B)$ to be the dirtiness of $B$
after the outer-zag step.
Using this notation,
we begin our analysis with the following lemma, which establishes a
dirtiness bound for subarrays far to the left of the cross-over point.

\begin{lemma}[Low-side Zig Lemma]
\label{lem:low-side-zig}
Suppose the dirtiness invariants are satisfied after the splitting step 
in iteration $j$. 
Then, for $i\le m_0-2$,
after the first (outer zig) phase in iteration $j$, 
\[
\RD(A^{(j)}_i) \le \delta D(A^{(j)}_{i+1}) \le 
4^{d_{i+1,j}}\delta^{d_{i+1,j}} \beta n_j,
\]
provided $\delta\le  1/12$ and $\beta\le 1/180$.
\end{lemma}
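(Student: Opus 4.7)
The plan is to invoke the 0-1 principle, so that since $A^{(j)}_i$ lies well to the left of the cross-over point (we have $i \le m_0-2$), its dirtiness equals the number of $1$'s it contains, and likewise for $A^{(j)}_{i+1}$. The outer-zig phase processes the pairs $(1,2),(2,3),\ldots$ in order, so $\RD(A^{(j)}_i)$ is by definition the number of $1$'s in $A^{(j)}_i$ immediately after the \textsf{Reduce} call on pair $(i,i+1)$, after which $A^{(j)}_i$ is untouched for the remainder of the phase.

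The central mechanism is the following. Just before the pair $(i,i+1)$ is processed, $A^{(j)}_{i+1}$ has not yet been touched during the outer zig and therefore contains exactly $D(A^{(j)}_{i+1})$ ones, while $A^{(j)}_i$ contains some number $u_i$ of $1$'s accumulated from earlier pair-processings. The inner zig-zag swap then relocates all $D(A^{(j)}_{i+1})$ of those ones into $A^{(j)}_i$. Now applying \textsf{Reduce} as a $(\delta,5/6)$-attenuator (rather than as a halver), the post-swap count of $1$'s in the first argument $A^{(j)}_i$, which is $D(A^{(j)}_{i+1})$, plays the role of $k_1$; the attenuator conclusion then drives this count down to at most $\delta\, D(A^{(j)}_{i+1})$. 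This is exactly the bound $\RD(A^{(j)}_i)\le \delta\, D(A^{(j)}_{i+1})$, and the factor of $\delta$ together with the post-splitting invariant on $D(A^{(j)}_{i+1})$ gives the second inequality $\delta\, D(A^{(j)}_{i+1}) \le 4^{d_{i+1,j}}\delta^{d_{i+1,j}} \beta n_j$.

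The main obstacle is verifying that the attenuator hypothesis $k \le (5/6) n_j$ holds at this \textsf{Reduce} call, where $k=u_i+D(A^{(j)}_{i+1})$ is the total number of $1$'s in $A^{(j)}_i \cup A^{(j)}_{i+1}$ at that moment. A short induction over the pairs processed so far in the outer zig, using only conservation of $1$'s under swap followed by \textsf{Reduce}, shows $u_{i+1} \le u_i + D(A^{(j)}_{i+1})$ and hence
\[
u_i + D(A^{(j)}_{i+1}) \le \sum_{k=1}^{i+1} D(A^{(j)}_k).
\]
Since $i+1 \le m_0-1$, the Concentration of Dirtiness Lemma bounds the right side by $8\beta n_j/(1-8\delta)$. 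Plugging in $\delta\le 1/12$ and $\beta\le 1/180$ gives $8\beta/(1-8\delta) \le (8/180)/(1/3) = 2/15 < 5/6$, so the attenuator condition is safely satisfied. The key conceptual point, which makes this work, is the decision to use the attenuator bound rather than the $(\beta,5/6)$-halver bound on \textsf{Reduce}: the halver bound would produce an output count proportional to the entire input $u_i + D(A^{(j)}_{i+1})$, introducing an additive $u_i$ error that cannot be absorbed into the exponentially decaying potential, whereas the attenuator bound isolates only the $1$'s that were freshly introduced by the swap.
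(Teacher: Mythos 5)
Your proposal is correct and follows essentially the same route as the paper's proof: it uses the dirtiness invariant for the still-untouched $A^{(j)}_{i+1}$, the Concentration of Dirtiness Lemma (plus conservation of $1$'s pushed rightward) to verify that the total number of $1$'s in the pair is far below $(5/6)n_j$, and then the $(\delta,5/6)$-attenuator property of \textsf{Reduce} applied after the inner zig-zag swap, noting that the swap makes $k_1 = D(A^{(j)}_{i+1})$. Your write-up merely makes explicit some steps the paper leaves terse (the identification of $k_1$ and the induction bounding the accumulated $1$'s in $A^{(j)}_i$), so no substantive difference.
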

\begin{proof}
Suppose $i\le m_0 - 2$.
Assuming that the dirtiness invariants are satisfied after the splitting step in
iteration~$j$, then, 
prior to the swaps done in the inner zig-zag step
for the subarrays $A^{(j)}_{i}$ and $A^{(j)}_{i+1}$, 
we have 
\[
D(A^{(j)}_{i+1}) \le 4^{d_{i+1,j}}\delta^{d_{i+1,j}-1} \beta n_j.
\]
In addition, by Lemma~\ref{lem:dirt},
the total dirtiness of all the subarrays from the first
subarray, $A^{(j)}_1$, to
the subarray $A^{(j)}_{i}$,
after the splitting step, is at most 
\begin{eqnarray*}
\frac{8\beta n_j}{1-8\delta} 
& \le & \frac{n_j}{6},
\end{eqnarray*}
provided $\delta\le  1/12$ and $\beta\le 1/180$.
Moreover, the cumulative way that we process the subarrays from the first
subarray to $A^{(j)}_{i}$ implies that the total amount of dirtiness brought
rightward from these subarrays to $A_i^{(j)}$ is at most the above value.
Therefore, 
after the swap of $A^{(j)}_{i}$ and $A^{(j)}_{i+1}$ in the inner zig-zag step,
the $(\delta,5/6)$-attenuator,
\textsf{Reduce},
will be effective to reduce the dirtiness for $A^{(j)}_i$
so that
\[
\RD(A^{(j)}_i) \le \delta D(A^{(j)}_{i+1}) \le 
4^{d_{i+1,j}}\delta^{d_{i+1,j}} \beta n_j.
\]
\end{proof}

As discussed at a high level earlier, 
the above proof provides a motivation for the inner zig-zag step, 
which might at first seem counter-intuitive, since it swaps many
pairs of items that are likely to be in the correct order already.
The reason we perform the inner zig-zag step, though, is that, as we
reasoned in the above proof, it
provides a way to ``shovel'' relatively large amounts of dirtiness, 
while reducing the dirtiness of all the subarrays along the way, starting
with the first subarray in $A$.
In addition to the above Low-side Zig Lemma, 
we have the following for a subarray
close to the uncertainty interval.

\begin{lemma}[Left-neighbor Zig Lemma]
\label{lem:left-neighbor-zig}
Suppose the dirtiness invariants are satisfied after the splitting step 
in iteration $j$.
If $i=m_0-1$,
then, after the first (outer zig) phase in iteration $j$, 
\[
\RD(A^{(j)}_i) \le \beta n_j,
\]
provided $\delta\le 1/12$, $\epsilon\le 1/32$, and $\beta\le 1/180$.
\end{lemma}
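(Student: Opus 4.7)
The plan is to apply the 0-1 principle and focus on the single Reduce call that touches $A^{(j)}_{m_0-1}$ for the last time during the outer zig phase, namely the call $\textsf{Reduce}(A^{(j)}_{m_0-1}, A^{(j)}_{m_0})$ following its inner zig-zag swap. Any bound on 1's after this call is the final value of $\RD(A^{(j)}_{m_0-1})$ in the outer zig, since $A^{(j)}_{m_0-1}$ is not touched again during that phase.

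First I would bound the number of 1's, call it $p$, present in $A^{(j)}_{m_0-1}$ just before the swap. Since the earlier outer-zig pair reductions only shuffle 1's within the prefix $A^{(j)}_1\cup\cdots\cup A^{(j)}_{m_0-1}$ (all of whose subarrays should be all-zero in the final sort), the cumulative 1-count is preserved from the post-splitting state. Invoking the Concentration of Dirtiness Lemma, this count is bounded by $8\beta n_j/(1-8\delta)\le n_j/6$ under $\delta\le 1/12$ and $\beta\le 1/180$, so $p\le n_j/6$. Let $q$ denote the number of 1's in the still-untouched $A^{(j)}_{m_0}$, and $T=p+q$ the combined count.

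The body of the proof is then a case split on $T$. When $T\le (5/6)n_j$, the $(\beta,5/6)$-halver property of Reduce applied with $k=T$ immediately yields at most $\beta T\le \beta n_j$ ones remaining in $A^{(j)}_{m_0-1}$ after Reduce, as required. When $T>(5/6)n_j$, the bound $p\le n_j/6$ forces $q>(2/3)n_j$; since the invariant $D(A^{(j)}_{m_0})\le n_j/6$ applies in the Case~2 cross-over location, this can happen only in Case~1, where the cross-over $K$ lies inside $A^{(j)}_{m_0}$. In this sub-case I would use Reduce's $\epsilon$-halver property together with the Overflow Lemma (applied when $T>n_j$), exploiting also the structural fact that $A^{(j)}_{m_0}$ must contain the index $K-n_j/2$, which bounds the target 1-count of $A^{(j)}_{m_0}$ above by $n_j/2-1$ and thereby leaves ample ``room'' for the incoming 1's to be absorbed rightward.

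The main obstacle is precisely this second sub-case: the raw $\epsilon$-halver bound, even with $\epsilon\le 1/32$, only guarantees about $\epsilon n_j$ residual 1's in $A^{(j)}_{m_0-1}$, which exceeds $\beta n_j=n_j/180$. Closing the gap requires combining the $\epsilon$-halver and Overflow bounds with the geometric constraint on $A^{(j)}_{m_0}$'s target so that the at most $n_j/6$ incoming 1's and any overflow 1's are absorbed into the many correct 1-positions of $A^{(j)}_{m_0}$, leaving only $\beta n_j$ behind. This is exactly where the tight parameter calibration $\alpha\le 1/15$, $\delta\le 1/12$, $\epsilon\le 1/32$, $\beta\le 1/180$ is used.
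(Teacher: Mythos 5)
Your first sub-case ($T\le \frac{5}{6}n_j$, apply the $(\beta,5/6)$-halver) is fine and matches the paper, as does your bound $p\le n_j/6$ on the 1's shoveled rightward into $A^{(j)}_{m_0-1}$. The genuine gap is your second sub-case, $T>\frac{5}{6}n_j$ with $K$ in $A^{(j)}_{m_0}$: you propose to close it with the $\epsilon$-halver property plus the Overflow Lemma, but as you yourself observe, those tools can only yield a residual of order $\epsilon n_j\approx n_j/32$ (or worse, once the overflow term is added), which can never be pushed below $\beta n_j=n_j/180$; there is no ``absorption'' mechanism in the definition of an $\epsilon$-halver that lets misplaced 1's be swallowed by the correct 1-positions of $A^{(j)}_{m_0}$, and no calibration of $\alpha,\delta,\epsilon,\beta$ rescues an $\epsilon k$ bound. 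So the proposal, as written, does not prove the lemma.

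The missing idea is that this troublesome sub-case is vacuous, and you already hold the ingredient needed to see it. Since $m_0$ is the leftmost subarray meeting the uncertainty interval $[K-n_j/2,\,K+1+n_j/2]$, the index of $K$ within $A^{(j)}_{m_0}$ is at least $n_j/2$, so the \emph{intended} number of 1's in $A^{(j)}_{m_0-1}\cup A^{(j)}_{m_0}$ is at most $n_j/2$. Any 1 in this pair (indeed, anywhere in the prefix $A^{(j)}_1\cup\cdots\cup A^{(j)}_{m_0}$) beyond that intended count is matched, by conservation, with a misplaced 0 lying to the right of $A^{(j)}_{m_0}$, and the dirtiness invariants bound those by $D(A^{(j)}_{m_1})\le n_j/6$ plus the Concentration of Dirtiness total $\frac{8\beta n_j}{1-8\delta}\le n_j/6$ for the subarrays beyond $m_1$. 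Since the outer-zig reductions preceding this call only move items within the prefix, this gives $T\le n_j/2+n_j/3=\frac{5}{6}n_j$ whenever $K$ is in $A^{(j)}_{m_0}$ (and $T\le n_j/3$ when $K$ is in $A^{(j)}_{m_1}$, by the invariant $D(A^{(j)}_{m_0})\le n_j/6$). Hence the $(\beta,5/6)$-halver property of \textsf{Reduce} applies in every case and immediately yields $\RD(A^{(j)}_{m_0-1})\le\beta n_j$; the $\epsilon$-halver and Overflow Lemma are not needed in this lemma at all. This counting argument is exactly the paper's proof, so your outline needs its second sub-case replaced by it rather than patched.
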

\begin{proof}
Since $i+1=m_0$, we are considering in this lemma
impact of calling \textsf{Reduce} on
$A^{(j)}_i$ and $A^{(j)}_{i+1} = A^{(j)}_{m_0}$, that is,
$A^{(j)}_i$ and the left subarray that intersects the uncertainty interval.
There are two cases.

\textbf{Case 1:} The cross-over point, $K$, is in 
$A^{(j)}_{m_1}$.
In this case, by the dirtiness invariant, and Lemma~\ref{lem:dirt},
even if this outer zig step has
brought all the 1's from the left rightward, the total number of 1's
in $A^{(j)}_{i} \cup A^{(j)}_{m_0}$ at this point is at most
\begin{eqnarray*}
\frac{n_j}{6} + \frac{8\beta n_j}{1-8\delta} 
& \le & \frac{n_j}{3},
\end{eqnarray*}
provided $\delta\le 1/12$ and $\beta\le 1/180$.
The above dirtiness is 
clearly less than $5n_j/6$ in this case.
Thus, the \textsf{Reduce} algorithm, which is
an $(\beta,5/6)$-halver, is
effective in this case to give us
\[
\RD(A^{(j)}_i) \le \beta n_j.
\]

\textbf{Case 2:} The cross-over point, $K$, is in $A^{(j)}_{m_0}$.
Suppose we were to sort the items currently in
$A^{(j)}_{i} \cup A^{(j)}_{m_0}$.
Then, restricted to the current state of these two subarrays,
we would get current cross-over point, $K'$,
which could be to the left, right, 
or possibly equal to the real one, $K$.
Note that each 0 that is currently to the
right of $A^{(j)}_{m_0}$
implies there must be a corresponding $1$ currently placed somewhere
from $A^{(j)}_{1}$ to $A^{(j)}_{m_0}$, possibly even 
in $A^{(j)}_{i} \cup A^{(j)}_{m_0}$.
By the dirtiness invariants for iteration~$j$,
the number of such additional 1's is bounded by 
\begin{eqnarray*}
\frac{n_j}{6} + \frac{8\beta n_j}{1-8\delta} 
& \le & \frac{n_j}{3},
\end{eqnarray*}
provided $\delta\le 1/12$ and $\beta\le 1/180$.
Thus, since there the number of 1's in $A^{(j)}_{m_0}$ 
is supposed to be at most $n_j/2$,
based on the location of the cross-over point for
this case 
(if the cross-over was closer than $n_j/2$ to the $i$th subarray, then
$i$ would be $m_0$),
the total number of 1's currently in
$A^{(j)}_{i} \cup A^{(j)}_{m_0}$
is at most $n_j/3 + n_j/2 = 5n_j/6$.
Therefore, the \textsf{Reduce} algorithm,
which is a
$(\beta,5/6)$-halver, is
effective in this case to give us
\[
\RD(A^{(j)}_i) \le \beta n_j.
\]
\end{proof}

There is also the following.

\begin{lemma}[Straddling Zig Lemma]
\label{lem:straddling-zig}
Suppose the dirtiness invariants are satisfied after the splitting step 
in iteration $j$.
Provided $\delta\le 1/12$, $\epsilon\le 1/32$, and $\beta\le 1/180$,
then after the step in first (outer zig) phase in iteration $j$, 
comparing $A^{(j)}_{m_0}$ and $A^{(j)}_{m_1}$,
we have the following:
\begin{enumerate}
\item
If $K$ is in $A^{(j)}_{m_0}$, then
\[
\RD(A^{(j)}_{m_1}) \le n_j/6 - \beta n_j.
\]
\item
If $K$ is in $A^{(j)}_{m_1}$, then
\[
\RD(A^{(j)}_{m_0}) \le n_j/6.
\]
\end{enumerate}
\end{lemma}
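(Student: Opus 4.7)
The plan is to handle both cases through a common mechanism: bound $z$, the total count of the ``offending'' bit (the bit that \textsf{Reduce} will try to push into the other subarray) present in $A^{(j)}_{m_0}\cup A^{(j)}_{m_1}$ just before the inner swap of the $(m_0,m_1)$ step, and then invoke the Overflow Lemma on the $\epsilon$-halver guise of \textsf{Reduce}. Since the inner swap and the subsequent \textsf{Reduce} both conserve $z$, a halver inequality in terms of $z$ then controls the residual count in the ``target'' subarray. In Case 1 the target is $A^{(j)}_{m_1}$, which should be all $1$'s, so $\RD(A^{(j)}_{m_1})$ is the number of stray $0$'s in $A^{(j)}_{m_1}$; Case 2 is analogous after exchanging the roles of $0$ and $1$ together with those of $m_0$ and $m_1$.

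The bound on $z$ is obtained by summing bit counts across all $2^j$ subarrays and using conservation of the total $0$-count ($=K$) or $1$-count ($=n-K$). In Case 1, the subarrays $A^{(j)}_1,\ldots,A^{(j)}_{m_0-1}$ have already been visited by the outer zig pass at this moment: the Low-side Zig Lemma yields at most $4^{d_{i+1,j}}\delta^{d_{i+1,j}}\beta n_j$ stray $1$'s in each $A^{(j)}_i$ with $i\le m_0-2$, and the Left-neighbor Zig Lemma contributes at most $\beta n_j$ stray $1$'s in $A^{(j)}_{m_0-1}$. Summing in the same geometric style as the proof of the Concentration of Dirtiness Lemma but with one extra factor of $\delta$ gives at most $3\beta n_j$ stray $1$'s in $A^{(j)}_1\cup\cdots\cup A^{(j)}_{m_0-1}$, which combined with $K\le m_0 n_j$ and the (non-negative) contributions from the later subarrays yields $z\le n_j+3\beta n_j$. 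Case 2 pivots instead on the right side: the subarrays $A^{(j)}_{m_1+1},\ldots,A^{(j)}_{2^j}$ have not yet been touched by the zig pass, so their $0$-content is controlled only by the initial dirtiness invariants and the Concentration of Dirtiness Lemma, giving at most $8\beta n_j/(1-8\delta)$ stray $0$'s there; conservation of the $1$-count together with $K\ge (m_1-1)n_j+1$ then gives the Case 2 analogue $z\le n_j+8\beta n_j/(1-8\delta)$.

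With these bounds in hand, the Overflow Lemma applied to the $\epsilon$-halver property of \textsf{Reduce} produces
\[
\RD(A^{(j)}_{m_1}) \;\le\; \epsilon n_j + (1-\epsilon)\cdot 3\beta n_j
\]
in Case 1 and
\[
\RD(A^{(j)}_{m_0}) \;\le\; \epsilon n_j + (1-\epsilon)\cdot\frac{8\beta n_j}{1-8\delta}
\]
in Case 2. The hard part will be the final numerical verification. Case 1 carries plenty of slack: under $\epsilon\le 1/32$ and $\beta\le 1/180$ the right-hand side is about $273 n_j/5760\approx 0.05\,n_j$, well below the target $n_j/6-\beta n_j\approx 0.16\,n_j$. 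Case 2 is the tight one: substituting $\epsilon=1/32$, $\beta=1/180$, $\delta=1/12$ gives $8\beta/(1-8\delta)=2/15$, so the Case 2 bound evaluates to $77 n_j/480$, sitting just under the target $n_j/6=80 n_j/480$ with only $3 n_j/480$ of slack. Thus both inequalities hold under the stated parameter constraints, with the Case 2 arithmetic being the binding obstruction that explains the specific numerical hypotheses $\delta\le 1/12$, $\epsilon\le 1/32$, $\beta\le 1/180$ in the statement of the lemma.
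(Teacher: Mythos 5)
Your proposal is correct and follows essentially the same route as the paper's proof: the same case split on whether $K$ lies in $A^{(j)}_{m_0}$ or $A^{(j)}_{m_1}$, the same conservation argument bounding the offending-bit count in $A^{(j)}_{m_0}\cup A^{(j)}_{m_1}$ (via the Low-side and Left-neighbor Zig Lemmas on the already-processed left side in Case 1, and the Concentration of Dirtiness Lemma on the untouched right side in Case 2), and the same application of the Overflow Lemma with matching numerics, including identifying Case 2 as the tight one. The only cosmetic differences are that you merge the paper's two subcases ($n'\le n_j$ versus $n'>n_j$) into one inequality and substitute $\delta\le 1/12$ to replace $\beta n_j/(1-8\delta)$ by $3\beta n_j$.
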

\begin{proof}
There are two cases.

\textbf{Case 1:} The cross-over point, $K$, is in $A^{(j)}_{m_0}$.
In this case, dirtiness for 
$A^{(j)}_{m_1}$
is caused by 0's in this subarray.
By a simple conservation argument,
such 0's can be matched with 1's that at this point in the algorithm remain
to the left of $A^{(j)}_{m_0}$.
Let $K_{m_0}$ be the index in $A^{(j)}_{m_0}$ for the cross-over
point, $K$.
By Lemmas~\ref{lem:low-side-zig}
and~\ref{lem:left-neighbor-zig}, and the fact that there are $2^k$
subarrays, $A_i^{(j)}$, with depth $k=d_{i,j}$,
the total number of 0's in 
$A^{(j)}_{m_0} \cup A^{(j)}_{m_1}$
is therefore bounded by
\begin{eqnarray*}
n' &=& K_{m_0} + \beta n_j +
\sum_{k=1}^{j} 2^k 4^k \delta^{k}\beta n_j \\
&<& K_{m_0} + \beta n_j \sum_{k=0}^{\infty} (8\delta)^{k} \\
&= & K_{m_0} + \frac{\beta n_j}{1-8\delta},
\end{eqnarray*}
provided $\delta<1/8$.
There are two subcases:
\begin{enumerate}
\item
$n'\le n_j$.
In this case, the \textsf{Reduce} algorithm,
which is an $\epsilon$-halver,
will be effective to reduce the
dirtiness of 
$A^{(j)}_{m_1}$ to $\epsilon n_j$, which is at most $n_j/6 - \beta n_j$ if
$\epsilon\le 1/8$ and $\beta\le 1/180$.
\item
$n'> n_j$.
In this case, 
note that, by the Overflow Lemma (Lemma~\ref{lem:overflow}),
the \textsf{Reduce} algorithm, which is an $\epsilon$-halver, 
will be effective to reduce the
number of 0's in
$A^{(j)}_{m_1}$, which is its dirtiness, to at most 
\begin{eqnarray*}
\epsilon n_j + (1-\epsilon)\cdot (n' - n_j) 
&=& \epsilon n_j + (1-\epsilon) \cdot \left(
K_{m_0} + \frac{\beta n_j}{1-8\delta} - n_j \right) \\
&\le&
\epsilon n_j + (1-\epsilon) \cdot \left(
\frac{\beta n_j}{1-8\delta} \right) \\
&\le&
\frac{n_j}{6} - \beta n_j,
\end{eqnarray*}
provided $\delta\le 1/12$, $\epsilon\le 1/32$, and $\beta\le 1/180$.
\end{enumerate}

\textbf{Case 2:} The cross-over point, $K$, is in $A^{(j)}_{m_1}$.
Let $K_{m_1}$ denote the index in $A_{m_1}^{(j)}$ of the cross-over
point, $K$.
In this case, dirtiness for 
$A^{(j)}_{m_0}$
is determined by 1's in this subarray.
By a simple conservation argument,
such 1's can come from 0's that at this point in the algorithm remain
to the right of $A^{(j)}_{m_1}$.
Thus, 
since there are suppose to be $(n_j-K_{m_1})$ 1's in 
$A^{(j)}_{m_0} \cup A^{(j)}_{m_1}$,
the number of 1's in these two
subarrays is bounded by
\[
n' = (n_j-K_{m_1}) + \frac{8\beta n_j}{1-8\delta},
\]
by Lemma~\ref{lem:dirt}.
There are two subcases:
\begin{enumerate}
\item
$n'\le n_j$.
In this case, the \textsf{Reduce} algorithm,
which is an $\epsilon$-halver,
will be effective to reduce the
dirtiness of 
$A^{(j)}_{m_0}$ to $\epsilon n_j$, 
which is at most $n_j/6$ if $\epsilon\le 1/6$.
\item
$n'> n_j$.
In this case, 
note that, by the Overflow Lemma (Lemma~\ref{lem:overflow}),
the \textsf{Reduce} algorithm, which is an $\epsilon$-halver, 
will be effective to reduce the
number of 1's in
$A^{(j)}_{m_0}$, which is its dirtiness, to at most 
\begin{eqnarray*}
\epsilon n_j + (1-\epsilon)\cdot (n'-n_j) 
&=&
\epsilon n_j + 
(1-\epsilon) \cdot \left(
n_j-K_{m_1} + \frac{8\beta n_j}{1-8\delta} - n_j\right) \\
&\le&
\epsilon n_j + (1-\epsilon)\cdot \left( 
\frac{8\beta n_j}{1-8\delta} \right) \\
&\le& \frac{n_j}{6},
\end{eqnarray*}
provided $\delta\le 1/12$, $\epsilon\le 1/32$, and $\beta\le 1/180$.
\end{enumerate}
\end{proof}

In addition, we have the following.

\begin{lemma}[Right-neighbor Zig Lemma]
\label{lem:right-neighbor-zig}
Suppose the dirtiness invariant is satisfied after the splitting step 
in iteration $j$.
If $i=m_1+1$,
then, after the step comparing
subarray $A^{(j)}_{m_1}$ and
subarray $A^{(j)}_{i}$ in the
first (outer zig) phase in iteration $j$, 
\[
\RD(A^{(j)}_i) \le \beta n_j,
\]
provided $\delta\le 1/12$, $\epsilon\le 1/32$, and $\beta\le 1/180$.
Also, if the cross-over is in $A^{(j)}_{m_0}$, then
$\RD(A^{(j)}_{m_1})\le n_j/6$.
\end{lemma}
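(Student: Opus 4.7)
The plan is to mirror the proof of the Left-neighbor Zig Lemma (Lemma~\ref{lem:left-neighbor-zig}) and to exploit whatever the Straddling Zig Lemma (Lemma~\ref{lem:straddling-zig}) has just established on $A^{(j)}_{m_1}$ (or $A^{(j)}_{m_0}$), so as to capitalize on the left-right symmetry of the outer zig phase. At the moment we enter the step processing $(A^{(j)}_{m_1},A^{(j)}_{m_1+1})$, nothing has touched $A^{(j)}_{m_1+1}$ since the splitting step, so its dirtiness is bounded directly by the dirtiness invariant for subarrays outside the uncertainty window, while $A^{(j)}_{m_1}$ carries only the residual dirtiness left by the straddling zig. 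I would then split on whether the cross-over point $K$ lies in $A^{(j)}_{m_0}$ or in $A^{(j)}_{m_1}$.

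\emph{Case 1} ($K$ in $A^{(j)}_{m_0}$). Both $A^{(j)}_{m_1}$ and $A^{(j)}_{m_1+1}$ should contain only $1$'s in a final sorting, so dirtiness equals the count of $0$'s in each. The Straddling Zig Lemma gives $\RD(A^{(j)}_{m_1}) \le n_j/6 - \beta n_j$, and the dirtiness invariant bounds the number of $0$'s in $A^{(j)}_{m_1+1}$ by $4^{d_{m_1+1,j}}\delta^{d_{m_1+1,j}-1}\beta n_j$, which is maximized at $4\beta n_j$ for $d_{m_1+1,j}=1$ and decays geometrically thereafter for $\delta \le 1/12$. Their sum sits safely below $(5/6)n_j$, so \textsf{Reduce}, applied after the inner zig-zag swap, operates as a $(\beta,5/6)$-halver and drives the number of $0$'s in $A^{(j)}_{m_1+1}$ to at most $\beta$ times that sum, giving $\RD(A^{(j)}_{m_1+1}) \le \beta n_j$. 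The secondary bound $\RD(A^{(j)}_{m_1}) \le n_j/6$ then follows from conservation of $0$'s in the pair, the $-\beta n_j$ slack that the Straddling Zig Lemma built into its bound on $\RD(A^{(j)}_{m_1})$ absorbing the small contamination brought in by the swap with $A^{(j)}_{m_1+1}$ under the tuned parameter regime.

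\emph{Case 2} ($K$ in $A^{(j)}_{m_1}$). Only $A^{(j)}_{m_1+1}$ should be all $1$'s, and the position $K_{m_1}$ of the cross-over inside $A^{(j)}_{m_1}$ is forced to satisfy $K_{m_1} \le n_j/2$, since otherwise $A^{(j)}_{m_0}$ would lie entirely to the left of the uncertainty interval and could not be an intersecting subarray. By a conservation argument paralleling Case~2 of the Straddling Zig Lemma, I would bound the total number of $0$'s in the pair $A^{(j)}_{m_1}\cup A^{(j)}_{m_1+1}$ by $K_{m_1}$ plus whatever has leaked in from the right, which the Concentration of Dirtiness Lemma (Lemma~\ref{lem:dirt}) together with the invariant on $A^{(j)}_{m_1+1}$ controls as an additive $O(\beta n_j/(1-8\delta))$ term. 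Since $K_{m_1} + O(\beta n_j)$ is well below $(5/6)n_j$, \textsf{Reduce} as a $(\beta,5/6)$-halver once more yields $\RD(A^{(j)}_{m_1+1}) \le \beta n_j$.

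The main obstacle will be Case~1's secondary claim: one must verify that the additive slack introduced by swapping in the not-yet-touched $A^{(j)}_{m_1+1}$ does not push the bound on $\RD(A^{(j)}_{m_1})$ above $n_j/6$ in the worst case $d_{m_1+1,j}=1$. The required arithmetic is of the same flavor as that in the second subcase of the Straddling Zig Lemma's proof and is to be carried out under the same parameter choices $\alpha \le 1/15$, $\epsilon \le 1/32$, $\beta \le 1/180$, and $\delta \le 1/12$.
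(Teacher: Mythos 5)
Your case split and the machinery you invoke are the same as the paper's: condition on whether $K$ lies in $A^{(j)}_{m_0}$ or $A^{(j)}_{m_1}$, combine the Straddling Zig Lemma's bound on the straddling subarray with the dirtiness invariant for the still-untouched $A^{(j)}_{m_1+1}$ in a conservation count of the 0's in the pair, and then apply the $(\beta,5/6)$-halver property of \textsf{Reduce}. Two steps, however, do not hold up as written. In Case 2 the bookkeeping of your conservation argument is off: the excess 0's in $A^{(j)}_{m_1}\cup A^{(j)}_{m_1+1}$ are matched to 1's still remaining to the \emph{left} of the pair, and after the straddling step the dominant such term is the up-to-$n_j/6$ dirtiness that the Straddling Zig Lemma allows to remain in $A^{(j)}_{m_0}$, on top of the geometric tail $\beta n_j/(1-8\delta)$ from the subarrays further left; nothing relevant ``leaks in from the right.'' The paper's count is $K_{m_1}+n_j/6+\beta n_j/(1-8\delta)\le 2n_j/3+\beta n_j/(1-8\delta)\le 5n_j/6$; your claimed $K_{m_1}+O(\beta n_j/(1-8\delta))$ omits the $n_j/6$ term, and although the conclusion survives (precisely because $K_{m_1}\le n_j/2$ leaves that much room under the $5n_j/6$ threshold), the step as you state it is unjustified.

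Second, you explicitly defer the arithmetic for the secondary claim that the dirtiness of $A^{(j)}_{m_1}$ after this step is at most $n_j/6$, and that is exactly the part that needs an argument rather than a promise: in the worst case $d_{m_1+1,j}=1$ the invariant only gives $4\beta n_j$ for $A^{(j)}_{m_1+1}$, so adding it to the Straddling Zig bound $n_j/6-\beta n_j$ yields $n_j/6+3\beta n_j$, and the $\beta n_j$ of built-in slack you point to does not by itself absorb the contamination. The paper handles this tersely, asserting that the total number of 0's in the pair is at most $n_j/6$; to make that airtight one should note that the Straddling Zig Lemma's proof actually leaves far more room than its statement (it bounds the dirtiness of $A^{(j)}_{m_1}$ by roughly $\epsilon n_j+(1-\epsilon)\beta n_j/(1-8\delta)$, well below $n_j/6-4\beta n_j$ for $\epsilon\le 1/32$, $\beta\le 1/180$, $\delta\le 1/12$). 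So your instinct that this is the delicate spot is sound, but a complete proof must carry out that calculation instead of listing it as an open obstacle.
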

\begin{proof}
Since $i-1=m_1$, we are considering in this lemma
the result of calling \textsf{Reduce} on 
$A^{(j)}_{i-1} = A^{(j)}_{m_1}$ and $A^{(j)}_i$.
There are two cases.

\textbf{Case 1:} The cross-over point, $K$, is in 
$A^{(j)}_{m_0}$.
In this case, by the dirtiness invariant
for $A_i^{(j)}$
and the previous lemma, the total number of 0's
in $A^{(j)}_{i} \cup A^{(j)}_{m_1}$ at this point is at most
$n_j/6$; hence, $\RD(A^{(j)}_{m_1})\le n_j/6$ after this comparison.
In addition, 
in this case,
the $(\beta,5/6)$-halver algorithm, \textsf{Reduce}, is
effective to give us
\[
\RD(A^{(j)}_i) \le \beta n_j.
\]

\textbf{Case 2:} The cross-over point, $K$, is in $A^{(j)}_{m_1}$.
Suppose we were to sort the items currently in
$A^{(j)}_{m_1} \cup A^{(j)}_{i}$.
Then, restricted to these two subarrays,
we would get a cross-over point, $K'$,
which could be to the left, right, 
or possibly equal to the real one, $K$.
Note that each 1 that is currently to the
left of $A^{(j)}_{m_1}$
implies there must be a corresponding $0$ currently placed somewhere
from $A^{(j)}_{m_1}$ to $A^{(j)}_{2^j}$,
possibly even 
in 
$A^{(j)}_{m_1} \cup A^{(j)}_{i}$.
The bad scenario with respect to dirtiness for
$A^{(j)}_{i}$ is when 0's are moved into
$A^{(j)}_{m_1} \cup A^{(j)}_{i}$.

By Lemmas~\ref{lem:low-side-zig},
\ref{lem:left-neighbor-zig},
and \ref{lem:straddling-zig},
and a counting argument similar to that made in the proof of 
Lemma~\ref{lem:straddling-zig},
the number of such additional 0's is bounded by 
\[
\frac{\beta n_j}{1-8\delta} + \frac{n_j}{6}.
\]
Thus, since the number of 0's 
in $A^{(j)}_{m_1}$, based on the location of the cross-over point,
is supposed to be at most $n_j/2$,
the total number of 0's currently in
$A^{(j)}_{i} \cup A^{(j)}_{m_1}$
is at most
\[
\frac{\beta n_j}{1-8\delta} + \frac{2n_j}{3}
\le 5n_j/6,
\]
provided $\delta\le 1/12$, $\epsilon\le 1/32$, and $\beta\le 1/180$.
Therefore, the $(\beta,5/6)$-halver algorithm, \textsf{Reduce}, is
effective in this case to give us
\[
\RD(A^{(j)}_i) \le \beta n_j.
\]
\end{proof}

Note that the above lemma covers the case just before we do the next 
inner zig-zag step involving
the subarrays $A^{(j)}_{i}$ 
and $A^{(j)}_{i+1}$.
For bounding the dirtiness after this inner zig-zag step we have the following.

\begin{lemma}[High-side Zig Lemma]
\label{lem:high-side-zig}
Suppose the dirtiness invariant is satisfied after the splitting step 
in iteration $j$. 
Provided $\delta\le 1/12$, $\epsilon\le 1/32$, and $\beta\le 1/180$,
then, for $m_1+1 \le i < 2^j$,
after the first (outer zig) phase in iteration $j$, 
\[
\RD(A^{(j)}_i) \le D(A^{(j)}_{i+1}) + \delta^{i-m_1-1}\beta n_j.
\]
\end{lemma}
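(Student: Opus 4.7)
The plan is to analyze the outer-zig phase one compare-exchange step at a time from left to right, and to reduce the lemma to a clean induction on $i$. Let $D_i^{\mathrm{pre}}$ denote the dirtiness of $A^{(j)}_i$ at the moment the outer-zig step touching index $i$ (that is, the step that swaps and then \textsf{Reduce}s $A^{(j)}_i$ and $A^{(j)}_{i+1}$) is about to begin. Because $i\ge m_1+1$, every $A^{(j)}_k$ with $k\ge m_1+1$ should ultimately hold only $1$'s, so by the 0-1 principle its dirtiness is precisely its number of $0$'s. Note also that once step $i$ is completed, $A^{(j)}_i$ is never touched again in the outer-zig phase, so the claimed bound on $\RD(A^{(j)}_i)$ is really a bound on the state of $A^{(j)}_i$ immediately after step $i$.

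The key observation is conservation of $0$'s. When step $i$ begins, $A^{(j)}_{i+1}$ has not yet been touched in this phase, so it carries exactly $D(A^{(j)}_{i+1})$ zeros from the splitting step, while $A^{(j)}_i$ carries $D_i^{\mathrm{pre}}$ zeros. The inner zig-zag swap merely relabels these contents, and the ensuing \textsf{Reduce} moves items around without creating or destroying any $0$'s, so trivially
\[
\RD(A^{(j)}_i)\ \le\ D(A^{(j)}_{i+1}) + D_i^{\mathrm{pre}}.
\]
The lemma therefore reduces to showing $D_i^{\mathrm{pre}} \le \delta^{i-m_1-1}\beta n_j$ for all $m_1+1\le i < 2^j$, which I would prove by induction on $i$. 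The base case $i=m_1+1$ is exactly the Right-neighbor Zig Lemma, which certifies that $A^{(j)}_{m_1+1}$ leaves step $m_1$ with dirtiness at most $\beta n_j = \delta^0\beta n_j$. For the inductive step, after the inner zig-zag swap of step $i$ the subarray $A^{(j)}_{i+1}$ is the \emph{second} argument of the subsequent \textsf{Reduce} and contains precisely $D_i^{\mathrm{pre}}$ zeros; invoking \textsf{Reduce} as a $(\delta,5/6)$-attenuator (with $k$ taken to be the total number of $0$'s in the pair) shrinks this count to at most $\delta\cdot D_i^{\mathrm{pre}}$, which is what the next step sees as $D_{i+1}^{\mathrm{pre}}$, closing the induction.

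The one point requiring care is verifying the $5/6$ hypothesis of the attenuator at each inductive step, i.e.\ that the total number of $0$'s in the pair, $D(A^{(j)}_{i+1}) + D_i^{\mathrm{pre}}$, is below $(5/6)n_j$. Plugging in the dirtiness-invariant bound $D(A^{(j)}_{i+1}) \le 4^{d_{i+1,j}}\delta^{d_{i+1,j}-1}\beta n_j$, which under $4\delta \le 1$ is maximized at $4\beta n_j$ (the $d_{i+1,j}=1$ case), together with the worst-case inductive bound $D_i^{\mathrm{pre}} \le \beta n_j$, gives a total of at most $5\beta n_j$, which sits safely below $5n_j/6$ whenever $\beta\le 1/180$ and $\delta \le 1/12$. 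This is the main (and only really delicate) obstacle; once it is discharged, combining the inductive bound on $D_i^{\mathrm{pre}}$ with the conservation estimate from the second paragraph yields the claimed bound $\RD(A^{(j)}_i) \le D(A^{(j)}_{i+1}) + \delta^{i-m_1-1}\beta n_j$.
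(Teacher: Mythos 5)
Your proof is correct and follows essentially the same route as the paper's: induction on $i$ with the Right-neighbor Zig Lemma as base case, conservation of $0$'s bounding $\RD(A^{(j)}_i)$ by $D(A^{(j)}_{i+1})+D_i^{\mathrm{pre}}$, and the $(\delta,5/6)$-attenuator applied to the post-swap contents of $A^{(j)}_{i+1}$ to propagate the geometric bound $\delta^{i-m_1-1}\beta n_j$. Your explicit verification that the total number of $0$'s stays below $(5/6)n_j$ is a detail the paper leaves implicit, but it does not change the argument.
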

\begin{proof}
The proof is by induction, using Lemma~\ref{lem:right-neighbor-zig}
as the base case.
We assume inductively that before we 
do the swaps for the inner zig-zag step,
$ D(A^{(j)}_i) \le \delta^{i-m_1-1} \beta n_j$.
So after we do the swapping for the inner zig-zag step and
an $(\delta,5/6)$-attenuator algorithm, \textsf{Reduce}, we have
$\RD(A^{(j)}_i) \le D(A^{(j)}_{i+1})
                 + \delta^{i-m_1-1} \beta n_j$
and
$\RD(A^{(j)}_{i+1}) \le \delta^{i-m_1}\beta n_j$.
\end{proof}

In addition, by the induction from the proof of Lemma~\ref{lem:high-side-zig},
$ \RD(A^{(j)}_{2^j}) \le \delta^{2^j-m_1-1} \beta n_j$, after we complete
the outer zig phase.
So let us consider the changes caused by the outer
zag phase.

\begin{lemma}[High-side Zag Lemma]
\label{lem:high-side-zag}
Suppose the dirtiness invariant is satisfied after the splitting step 
in iteration $j$. 
Then, for $m_1+2 \le i \le 2^j$,
after the second (outer zag) phase in iteration $j$, 
\[
\LD(A^{(j)}_i) \le 
\delta \RD(A^{(j)}_{i-1}) \le 
\delta D(A^{(j)}_{i}) + \delta^{i-m_1-1} \beta n_j
\le 4^{d_{i,j}}\delta^{d_{i,j}}\beta n_j + \delta^{i-m_1-1} \beta n_j,
\]
provided $\delta\le 1/12$, $\epsilon\le 1/32$, and $\beta\le 1/180$.
\end{lemma}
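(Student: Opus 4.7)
The plan is to mirror the inductive structure of the High-side Zig Lemma but work right-to-left through the outer zag. I will focus on the single iteration of the zag loop in which the pair $(A^{(j)}_{i-1}, A^{(j)}_i)$ is processed: because the zag proceeds from $i=2^j$ downward and touches each subarray at most once more after this iteration (only through its left neighbor), the dirtiness of $A^{(j)}_i$ immediately after this Reduce call is exactly $\LD(A^{(j)}_i)$. A key observation is that, since $i \ge m_1 + 2$, both $A^{(j)}_{i-1}$ and $A^{(j)}_i$ lie strictly to the right of the cross-over, so dirtiness in either subarray is just the count of $0$'s, which are precisely the ``smallest-$k$'' elements that the $(\delta,5/6)$-attenuator is designed to pull into the first argument of \textsf{Reduce}.

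First I would note that at the moment of the inner zig-zag swap, $A^{(j)}_{i-1}$ has not yet been touched by the zag phase (the zag has so far only processed pairs to its right), so its $0$-count at that instant equals $\RD(A^{(j)}_{i-1})$. The swap therefore moves exactly $\RD(A^{(j)}_{i-1})$ zeros into position $A^{(j)}_i$. Invoking the $(\delta,5/6)$-attenuator property of \textsf{Reduce} from Theorem~\ref{thm:attenuate} then gives
\[
\LD(A^{(j)}_i) \;\le\; \delta \cdot \RD(A^{(j)}_{i-1}),
\]
which is the first inequality of the lemma.

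Next I would chain in the High-side Zig Lemma with index $i-1$, which is valid because the hypothesis $m_1 + 2 \le i \le 2^j$ places $i-1$ in the range $\{m_1+1, \ldots, 2^j - 1\}$ where that lemma applies. This gives $\RD(A^{(j)}_{i-1}) \le D(A^{(j)}_i) + \delta^{i-m_1-2}\beta n_j$, and multiplying by $\delta$ yields the second inequality $\delta D(A^{(j)}_i) + \delta^{i-m_1-1}\beta n_j$. Finally, substituting the iteration-$j$ dirtiness invariant $D(A^{(j)}_i) \le 4^{d_{i,j}}\delta^{d_{i,j}-1}\beta n_j$ (which applies since $i \ge m_1 + 2 > m_1$) produces the third inequality $4^{d_{i,j}}\delta^{d_{i,j}}\beta n_j + \delta^{i-m_1-1}\beta n_j$.

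The main obstacle is the bookkeeping needed to legitimately apply the $(\delta, 5/6)$-attenuator property: Theorem~\ref{thm:attenuate} requires that the total number of $0$'s in $A^{(j)}_{i-1} \cup A^{(j)}_i$ be at most $(5/6)n_j$. I expect this to follow from summing the bounds already supplied by the Low-side, Left-neighbor, Straddling, Right-neighbor, and High-side Zig Lemmas, together with the geometric tail estimate from the Concentration of Dirtiness Lemma, all of which force the aggregate zero-count in high-side subarrays to be a small constant fraction of $n_j$ under the parameter regime $\delta \le 1/12$, $\epsilon \le 1/32$, $\beta \le 1/180$. If the direct summation proved uncomfortably tight, a fallback would be to split into subcases and invoke the Overflow Lemma, in the style of the Straddling Zig Lemma, to salvage a bound of the same shape.
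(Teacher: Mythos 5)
Your proposal follows essentially the same route as the paper's proof: the inner zig-zag swap places the $\RD(A^{(j)}_{i-1})$ zeros into the second argument of \textsf{Reduce}, the $(\delta,5/6)$-attenuator then yields $\LD(A^{(j)}_i)\le\delta\,\RD(A^{(j)}_{i-1})$, and chaining the High-side Zig Lemma at index $i-1$ with the dirtiness invariant gives the remaining inequalities, with the attenuator's precondition justified by the Concentration of Dirtiness bound together with the zig-phase lemmas. The paper's own proof is exactly this (stated even more tersely), so your argument is correct and matches it, being if anything slightly more explicit about why the post-swap zero count in $A^{(j)}_i$ equals $\RD(A^{(j)}_{i-1})$.
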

\begin{proof}
By Lemmas~\ref{lem:dirt} and~\ref{lem:high-side-zig},
and a simple induction argument,
just before we do the swaps for the inner zig-zag step,
\[
\RD(A^{(j)}_{i-1}) \le D(A^{(j)}_i)
                 + \delta^{i-m_1-2} \beta n_j
\]
and
\[
\RD(A^{(j)}_{i}) \le \frac{8\beta n_j}{1-8\delta}.
\]
We then do the inner zig-zag swaps and,
provided $\delta\le 1/12$, $\epsilon\le 1/32$, and $\beta\le 1/180$,
we have a small enough 
dirtiness to apply the $(\delta,5/6)$-attenuator, \textsf{Reduce},
effectively, which completes the proof.
\end{proof}

In addition, we have the following.

\begin{lemma}[Right-neighbor Zag Lemma]
\label{lem:right-neighbor-zag}
Suppose the dirtiness invariant is satisfied after the splitting step 
in iteration $j$.
If $i=m_1+1$,
then, after the second (outer zag) phase in iteration $j$, 
\[
\LD(A^{(j)}_i) \le \beta n_j.
\]
provided $\delta\le 1/12$, $\epsilon\le 1/32$, and $\beta\le 1/180$.
\end{lemma}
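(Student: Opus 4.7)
The plan is to analyze the single call \textsf{Reduce}$(A^{(j)}_{m_1},A^{(j)}_{m_1+1})$ executed at iteration $i=m_1+1$ of the outer zag loop, together with its preceding inner zig-zag swap. This is the last zag operation to touch $A^{(j)}_{m_1+1}$, since all subsequent zag steps process pairs strictly to the left of it. In both cases for the location of the cross-over $K$, I will show that the count of elements determining $\LD(A^{(j)}_{m_1+1})$ (the $0$'s, since in either case $A^{(j)}_{m_1+1}$ should be all $1$'s) in the combined pair $A^{(j)}_{m_1}\cup A^{(j)}_{m_1+1}$ immediately before the call is at most $\tfrac{5}{6}n_j$, after which the $(\beta,5/6)$-halver property of \textsf{Reduce} yields $\LD(A^{(j)}_{m_1+1})\le\beta n_j$.

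For Case~1 ($K\in A^{(j)}_{m_0}$) the key observation is that $A^{(j)}_{m_1}$ has not yet been touched in the zag phase, so its current $0$-count equals $\RD(A^{(j)}_{m_1})\le n_j/6$ by the extra clause of the Right-neighbor Zig Lemma. Meanwhile the zag steps that preceded our call involve only pairs inside $A^{(j)}_{m_1+1}\cup\cdots\cup A^{(j)}_{2^j}$, so by conservation the total number of $0$'s in this right-hand union is the same as it was at the end of the outer zig phase; that total is bounded, via the Right-neighbor Zig Lemma ($\le\beta n_j$ in $A^{(j)}_{m_1+1}$) together with the geometric bound $\RD(A^{(j)}_{i+1})\le\delta^{i-m_1}\beta n_j$ extracted in the proof of the High-side Zig Lemma, by $\beta n_j+\tfrac{\delta\beta n_j}{1-\delta}=O(\beta n_j)$. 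Hence the total $0$-count in $A^{(j)}_{m_1}\cup A^{(j)}_{m_1+1}$ is at most $n_j/6+O(\beta n_j)$, which is well below $\tfrac{5}{6}n_j$ under the stated constants.

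For Case~2 ($K\in A^{(j)}_{m_1}$) I would apply the same positional argument used in Case~2 of the Left-neighbor Zig Lemma: the index $K_{m_1}$ of the cross-over within $A^{(j)}_{m_1}$ satisfies $K_{m_1}\le n_j/2$, so the supposed $0$-content of $A^{(j)}_{m_1}\cup A^{(j)}_{m_1+1}$ is at most $n_j/2$. A standard conservation argument (total $0$'s in $A$ equals $K$) bounds the surplus of $0$'s currently in this pair by the total number of $1$'s still misplaced in $A^{(j)}_1\cup\cdots\cup A^{(j)}_{m_0}$, which the zag phase has not yet touched; this surplus is at most $\RD(A^{(j)}_{m_0})+\sum_{i<m_0}\RD(A^{(j)}_i)\le n_j/6+O(\beta n_j)$, combining the Straddling Zig Lemma Case~2 bound with the geometric sum of Low-side Zig Lemma bounds. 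Adding these gives at most $n_j/2+n_j/6+O(\beta n_j)\le\tfrac{5}{6}n_j$ total $0$'s in $A^{(j)}_{m_1}\cup A^{(j)}_{m_1+1}$, and applying \textsf{Reduce} as a $(\beta,5/6)$-halver completes the proof.

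The main obstacle will be the conservation bookkeeping in Case~2, ensuring that the surplus-$0$ count is not double-charged across the two subarrays of the pair and that the three geometric-sum error contributions (from Low-side Zig, Straddling Zig, and High-side Zig) simultaneously fit within the $\tfrac{1}{6}n_j$ slack between $\tfrac{2}{3}n_j$ and $\tfrac{5}{6}n_j$. That they do is essentially inherited from the already-tight choice $\beta\le 1/180$ and $\delta\le 1/12$ that underpins the preceding lemmas in this appendix.
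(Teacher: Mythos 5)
Your proposal is correct and follows essentially the same route as the paper: in both cases for $K$ you bound the number of $0$'s in $A^{(j)}_{m_1}\cup A^{(j)}_{m_1+1}$ at the time of the zag call by $5n_j/6$ (using the end-of-zig bounds on $A^{(j)}_{m_1}$ and the low side, plus a conservation argument for misplaced elements on the other side of the cross-over) and then invoke the $(\beta,5/6)$-halver property of \textsf{Reduce}. The only quibble is a constant in Case~1: the end-of-zig dirtiness of the subarrays right of $A^{(j)}_{m_1+1}$ is not the pure geometric $\delta^{i-m_1}\beta n_j$ (the High-side Zig Lemma bound carries the post-split terms $D(A^{(j)}_{i+1})$), so the high-side total of $0$'s is bounded by $\beta n_j + 8\beta n_j/(1-8\delta)$ rather than $\beta n_j+\delta\beta n_j/(1-\delta)$ --- the paper itself uses $n_j/6 + 8\beta n_j/(1-8\delta)$ --- but this leaves the same ample slack below $5n_j/6$ and does not affect the conclusion.
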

\begin{proof}
Since $i-1=m_1$, we are considering in this lemma
the result of calling \textsf{Reduce} on
$A^{(j)}_i$ and 
$A^{(j)}_{i-1} = A^{(j)}_{m_1}$.
There are two cases.

\textbf{Case 1:} The cross-over point, $K$, is in 
$A^{(j)}_{m_0}$.
In this case, by the previous lemmas, bounding the number of
0's that could have come to 
to this place from previously being in or to
the right of $A^{(j)}_{m_1}$,
the total number of 0's
in $A^{(j)}_{i} \cup A^{(j)}_{m_1}$ at this point is at most
\begin{eqnarray*}
\frac{8\beta n_j}{1-8\delta} + n_j/6 
&<& \frac{5n_j}{6},
\end{eqnarray*}
provided $\delta\le 1/12$, $\epsilon\le 1/32$, and $\beta\le 1/180$.
Thus, the $(\beta,5/6)$-halver algorithm, \textsf{Reduce}, is
effective in this case to give us
\[
\LD(A^{(j)}_i) \le \beta n_j.
\]

\textbf{Case 2:} The cross-over point, $K$, is in $A^{(j)}_{m_1}$.
Suppose we were to sort the items currently in
$A^{(j)}_{i} \cup A^{(j)}_{m_1}$.
Then, restricted to these two subarrays,
we would get a cross-over point, $K'$,
which could be to the left, right, 
or possibly equal to the real one, $c$.
Note that each 1 that is currently in or to the
left of $A^{(j)}_{m_0}$
implies there must be a corresponding $0$ currently placed somewhere
from $A^{(j)}_{m_1}$ to $A^{(j)}_{2^j}$, possibly even 
in $A^{(j)}_{m_1} \cup A^{(j)}_{i}$.
The bad scenario with respect to dirtiness for
$A^{(j)}_{i}$ is when 0's are moved into
$A^{(j)}_{i} \cup A^{(j)}_{m_1}$.
By the previous lemmas,
the number of such additional 0's (that is, 1's currently in or to
the left of $A^{(j)}_{m_0}$) is bounded by 
\[
\frac{\beta n_j}{1-8\delta} + n_j/6.
\]
Thus, since the number of 0's that 
are supposed to be in
$A^{(j)}_{m_1}$ is at most $n_j/2$ 
based on the location of the cross-over point,
the total number of 0's currently in
$A^{(j)}_{i} \cup A^{(j)}_{m_1}$
is at most  
\[
\frac{\beta n_j}{1-8\delta} + n_j/6 + n_j/2
\le 5n_j/6,
\]
provided $\delta\le 1/12$, $\epsilon\le 1/32$, and $\beta\le 1/180$.
Therefore, the $(\beta,5/6)$-halver algorithm, \textsf{Reduce}, is
effective in this case to give us
\[
\LD(A^{(j)}_i) \le \beta n_j.
\]
\end{proof}

Next, we have the following.

\begin{lemma}[Straddling Zag Lemma]
\label{lem:straddling-zag}
Suppose the dirtiness invariant is satisfied after the splitting step 
in iteration $j$
and
$\delta\le 1/12$, $\epsilon\le 1/32$, and $\beta\le 1/180$.
Then, after the comparison of $A^{(j)}_{m_0}$ and $A^{(j)}_{m_1}$,
\begin{enumerate}
\item
If $K+n_j/4$ indexes a cell in $A^{(j)}_{m_0}$,
then after the second (outer zag) phase, 
$\LD(A^{(j)}_{m_1}) \le \beta n_j$.
\item
If $K-n_j/4$ indexes a cell in $A^{(j)}_{m_1}$,
then after the second (outer zag) phase, 
$\LD(A^{(j)}_{m_0}) \le \beta n_j$.
\item
Else, 
if $K$ is in $A^{(j)}_{m_1}$, then 
$\LD(A^{(j)}_{m_0}) \le n_j/12-\beta n_j$, 
and 
if $K$ is in $A^{(j)}_{m_0}$, then 
$\LD(A^{(j)}_{m_1}) \le n_j/12 $.
\end{enumerate}
\end{lemma}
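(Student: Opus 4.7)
The plan is to mirror the structure of the Straddling Zig Lemma, now applied to the outer-zag step. By the processing order of the outer zag, at the moment the straddling pair $(A^{(j)}_{m_0}, A^{(j)}_{m_1})$ is passed to \textsf{Reduce}, the subarrays to the right of $m_1$ have already been processed in the zag direction (with dirtiness bounds from the High-side Zag and Right-neighbor Zag Lemmas), while the subarrays to the left of $m_0$ still carry their post-zig dirtiness (with bounds from the Low-side Zig and Left-neighbor Zig Lemmas). I will use a conservation argument to count the 0's (or 1's) currently present in $A^{(j)}_{m_0} \cup A^{(j)}_{m_1}$, bound that count in terms of the position of the cross-over within the straddling pair plus the relevant exterior dirtiness, and then invoke either the $(\beta, 5/6)$-halver property of \textsf{Reduce} when the count is at most $(5/6)n_j$, or the $\epsilon$-halver property together with the Overflow Lemma (Lemma~\ref{lem:overflow}) otherwise.

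For the first two symmetric cases, suppose $K + n_j/4$ indexes a cell in $A^{(j)}_{m_0}$, so that $K_{m_0} \le 3n_j/4$ and the final-sorting target for 0's in the straddling pair is at most $3n_j/4$. Summing the dirtiness of the left-exterior subarrays as in the Concentration of Dirtiness Lemma contributes at most $\frac{\beta n_j}{1 - 8\delta}$ to the number of 0's now in the pair, so the total is at most $3n_j/4 + \frac{\beta n_j}{1 - 8\delta} \le 5n_j/6$ under the hypotheses $\beta \le 1/180$ and $\delta \le 1/12$. Thus the $(\beta, 5/6)$-halver property applies and yields $\LD(A^{(j)}_{m_1}) \le \beta n_j$. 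The other sub-case ($K - n_j/4$ in $A^{(j)}_{m_1}$) is entirely symmetric: I count 1's, bound the right-exterior dirtiness by the same kind of geometric sum (now additionally absorbing the $\delta^{i - m_1 - 1} \beta n_j$ overhead from the High-side Zag Lemma, which still telescopes to a term of order $\beta n_j / (1 - 8\delta)$), and conclude $\LD(A^{(j)}_{m_0}) \le \beta n_j$.

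In the third case, the cross-over lies within $n_j/4$ of the boundary between $m_0$ and $m_1$, so the target count of 0's (or 1's) in the straddling pair can exceed $n_j$ by up to about $n_j/4$. Here I invoke the Overflow Lemma against the $\epsilon$-halver property of \textsf{Reduce}: plugging in an excess of at most $K_{m_0} - n_j + \frac{\beta n_j}{1 - 8\delta}$ for the sub-case $K \in A^{(j)}_{m_0}$, and the analogous quantity for $K \in A^{(j)}_{m_1}$, the resulting residue $\epsilon n_j + (1 - \epsilon) \cdot (\text{excess})$ is verified to lie below $n_j/12$ (respectively, below $n_j/12 - \beta n_j$) when $\epsilon \le 1/32$, $\beta \le 1/180$, and $\delta \le 1/12$. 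The asymmetric $\beta n_j$ slack in the $K \in A^{(j)}_{m_1}$ sub-case is retained so that the next downstream step of the outer-zag pass (the forthcoming Left-neighbor Zag analysis) still has enough budget to reestablish the dirtiness invariants.

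The main obstacle is the bookkeeping: the left exterior delivers post-zig dirtiness, governed by one geometric series in $\delta$, while the right exterior delivers post-zag dirtiness, governed by a slightly different series inflated by the $\delta^{i - m_1 - 1}\beta n_j$ overhead from the High-side Zag Lemma. Verifying that both series are simultaneously controlled by essentially the same $\frac{\beta n_j}{1 - 8\delta}$ envelope, and that the $\epsilon$-halver residue from the Overflow Lemma cleanly separates into the ``$\beta n_j$'' and ``$n_j/12$'' regimes of the statement, is the delicate part. The parameter choices $\alpha \le 1/15$ (yielding $\epsilon \le 1/32$, $\beta \le 1/180$, and $\delta \le 1/12$) are exactly tuned so that every sub-case inequality closes with a small but strictly positive slack.
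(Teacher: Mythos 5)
Your proposal is correct and follows essentially the same route as the paper's proof: the same three-case split (with the residual case further split by where $K$ lies), the same conservation argument counting 0's (resp.\ 1's) in $A^{(j)}_{m_0}\cup A^{(j)}_{m_1}$ as the intended count $K_{m_0}$ (resp.\ $n_j-K_{m_1}$) plus the post-zig left-exterior dirtiness or the post-zag right-exterior dirtiness (the latter carrying the extra $\delta^{i-m_1-1}\beta n_j$ terms summing to $\frac{\delta\beta n_j}{1-\delta}$), and the same invocation of the $(\beta,5/6)$-halver in the first two cases and the $\epsilon$-halver with the Overflow Lemma in the third, with matching parameter checks. The only cosmetic difference is that the paper explicitly separates the subcases $n'\le n_j$ and $n'>n_j$, which your plan covers implicitly.
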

\begin{proof}
Let us consider each case.
\begin{enumerate}
\item
$K+n_j/4$ indexes a cell in $A^{(j)}_{m_0}$.
Let $K_{m_0}$ denote the index of $K$ in $A^{(j)}_{m_0}$;
hence, $K_{m_0} < 3n_j/4$.
In this case,
the number of 0's in
$A^{(j)}_{m_0} \cup A^{(j)}_{m_1}$ is at most 
$K_{m_0}$ plus at most the number of 1's that remain left of 
$A^{(j)}_{m_0}$, which is at most
\[
K_{m_0} + \frac{\beta n_j}{1-8\delta} 
\le 5n_j/6,
\]
provided $\delta\le 1/12$, $\epsilon\le 1/32$, and $\beta\le 1/180$.
Thus, the $(\beta,5/6)$-halver, \textsf{Reduce}, is effective to
give us
$\LD(A^{(j)}_{m_1})\le \beta n_j$.
\item
$K-n_j/4$ indexes a cell in $A^{(j)}_{m_1}$.
Let $K_{m_1}$ denote the index of $K$ in $A^{(j)}_{m_1}$;
hence, $K_{m_1}\ge n_j/4$.
In this case,
the number of 1's in
$A^{(j)}_{m_0} \cup A^{(j)}_{m_1}$ is at most 
$n_j - K_{m_1}$ plus the number of 0's that remain right
of 
$A^{(j)}_{m_1}$, which is at most 
\[
n_j - K_{m_1} + \frac{\beta n_j}{1-8\delta}
+ \frac{\delta\beta n_j}{1-\delta}
\le 3n_j/4 + \frac{\beta n_j}{1-8\delta}
+ \frac{\delta\beta n_j}{1-\delta}
\le 5n_j/6,
\]
provided $\delta\le 1/12$, $\epsilon\le 1/32$, and $\beta\le 1/180$.
Thus, the $(\beta,5/6)$-halver, \textsf{Reduce}, is effective to
give us
$\LD(A^{(j)}_{m_0})\le \beta n_j$.
\item
Suppose neither of the above two conditions are met.
There are two cases.

\textbf{Case 1:} The cross-over point, $K$, is in $A^{(j)}_{m_0}$.
Let $K_{m_0}$ denote the index for $K$ in $A^{(j)}_{m_0}$.
In this case, dirtiness for 
$A^{(j)}_{m_1}$
is caused by 0's coming
into $A^{(j)}_{m_0} \cup A^{(j)}_{m_1}$.
Such 0's can come from 1's that at this point in the algorithm remain
to the left of $A^{(j)}_{m_0}$.
Thus, the total number of 0's in 
$A^{(j)}_{m_0} \cup A^{(j)}_{m_1}$
is bounded by
\[
n' = K_{m_0} + \frac{\beta n_j}{1-8\delta}.
\]
There are two subcases:
\begin{enumerate}
\item
$n'\le n_j$.
In this case, the $\epsilon$-halver algorithm,
\textsf{Reduce}, will be effective to reduce the
dirtiness of 
$A^{(j)}_{m_1}$ to $\epsilon n_j$, which is at most $n_j/12$, if
$\epsilon\le 1/12$.
\item
$n'> n_j$.
By Lemma~\ref{lem:overflow},
the $\epsilon$-halver is effective
to reduce the number of 0's in
$A^{(j)}_{m_1}$, which is its dirtiness, to be at most
\begin{eqnarray*}
\epsilon n_j + (1-\epsilon)\cdot (n'-n_j) 
&=&
\epsilon n_j + 
(1+\epsilon)\cdot (K_{m_0} + \frac{\beta n_j}{1-8\delta} - n_j) \\
&\le&
\epsilon n_j + (1-\epsilon)\cdot \left(\frac{\beta n_j}{1-8\delta}\right) \\
&\le&
\frac{n_j}{12},
\end{eqnarray*}
provided $\delta\le 1/12$, $\epsilon\le 1/32$, and $\beta\le 1/180$.
\end{enumerate}

\textbf{Case 2:} The cross-over point, $K$, is in $A^{(j)}_{m_1}$.
Let $K_{m_1}$ denote the index for $K$ in $A^{(j)}_{m_1}$.
In this case, dirtiness for 
$A^{(j)}_{m_0}$
is determined by 1's in 
$A^{(j)}_{m_0} \cup A^{(j)}_{m_1}$.
Such 1's can come from 0's that at this point in the algorithm remain
to the right of $A^{(j)}_{m_1}$.
Thus, 
since there are suppose to be $(n_j-K_{m_1})$
1's in these two subarrays,
the total number of 1's in 
$A^{(j)}_{m_0} \cup A^{(j)}_{m_1}$
is bounded by
\[
n' = (n_j-K_{m_1}) + \frac{\delta\beta n_j}{1-\delta} 
     + \frac{\beta n_j}{1-8\delta}.
\]
There are two subcases:
\begin{enumerate}
\item
$n'\le n_j$.
In this case, the $\epsilon$-halver, \textsf{Reduce},
will be effective to reduce the dirtiness of 
$A^{(j)}_{m_0}$ to $\epsilon n_j$, which is at most $n_j/12-\beta n_j$,
if $\epsilon\le 1/16$ and $\beta\le 1/180$.
\item
$n'> n_j$.
In this case, by Lemma~\ref{lem:overflow},
the $\epsilon$-halver 
will be effective to reduce the
number of 1's in
$A^{(j)}_{m_0}$, which is its dirtiness, to be
at most
\begin{eqnarray*}
\epsilon n_j + (1-\epsilon)\cdot (n'-n_j) 
&=&
\epsilon n_j + 
(1-\epsilon)\cdot \left( (n_j-K_{m_1}) + \frac{\delta\beta n_j}{1-\delta} 
      + \frac{\beta n_j}{1-8\delta} - n_j \right) \\
&\le&
\epsilon n_j  + (1-\epsilon)\cdot\left(
\frac{\delta\beta n_j}{1-\delta} 
      + \frac{\beta n_j}{1-8\delta} \right) \\
&\le& \frac{n_j}{12} - \beta n_j,
\end{eqnarray*}
provided $\delta\le 1/12$, $\epsilon\le 1/32$, and $\beta\le 1/180$.
\end{enumerate}
\end{enumerate}
\end{proof}

Next, we have the following.

\begin{lemma}[Left-neighbor Zag Lemma]
\label{lem:left-neighbor-zag}
Suppose the dirtiness invariant is satisfied after the splitting step 
in iteration $j$.
Then, after the call to the
$(\beta,5/6)$-halver, \textsf{Reduce}, comparing
$A^{(j)}_i$, for $i=m_0-1$, and
$A^{(j)}_{m_0}$ in
the second (outer zag) phase in iteration $j$, 
\[
\LD(A^{(j)}_i) \le \beta n_j,
\]
provided $\delta\le 1/12$, $\epsilon\le 1/32$, and $\beta\le 1/180$.
Also, if the cross-over point, $K$, is in $A^{(j)}_{m_1}$, then 
$\LD(A^{(j)}_{m_0})\le 2\beta n_j$, if 
$K-n_j/4$ indexes a cell in $A^{(j)}_{m_1}$, 
and $\LD(A^{(j)}_{m_0})\le n_j/12$,
if $K-n_j/4$ indexes a cell in $A^{(j)}_{m_0}$.
\end{lemma}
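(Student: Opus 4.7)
The plan is to mirror the Right-neighbor Zig Lemma (Lemma~\ref{lem:right-neighbor-zig}), working on the left side of the uncertainty interval during the outer zag rather than on the right side during the outer zig. At the moment \textsf{Reduce}$(A^{(j)}_{m_0-1}, A^{(j)}_{m_0})$ is called, the subarray $A^{(j)}_{m_0-1}$ has not been touched by the zag, so by the Left-neighbor Zig Lemma it still holds at most $\beta n_j$ ones (which equals its dirtiness, since $A^{(j)}_{m_0-1}$ should be all 0's regardless of where $K$ lies). The subarray $A^{(j)}_{m_0}$ has just been processed by the Straddling Zag, so the bounds from Lemma~\ref{lem:straddling-zag} are in hand.

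Split into two cases by the location of $K$, in parallel with the Right-neighbor Zig proof. In Case~1 ($K \in A^{(j)}_{m_0}$), the key observation is that since $m_0$ is the smallest index whose subarray intersects the uncertainty interval $[K - n_j/2,\, K + 1 + n_j/2]$, the offset $K_{m_0}$ of $K$ within $A^{(j)}_{m_0}$ must satisfy $K_{m_0} > n_j/2$, so the ideal number of 1's in $A^{(j)}_{m_0-1} \cup A^{(j)}_{m_0}$ is at most $n_j/2$. By global conservation, the actual number of 1's in this union equals $(n_j - K_{m_0})$ plus the dirtiness of the all-1's subarrays to the right of $A^{(j)}_{m_0}$ (which are missing those 1's), minus the dirtiness of the all-0's subarrays to the left (which hold extra 1's). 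The right-side contribution is bounded by the Straddling Zag Lemma's bound on $\LD(A^{(j)}_{m_1})$ (either $\le \beta n_j$ if $K + n_j/4 \in A^{(j)}_{m_0}$, or $\le n_j/12$ otherwise) plus the geometric tail from the High-side Zag Lemma, keeping the total well below $5n_j/6$; the $(\beta, 5/6)$-halver property of \textsf{Reduce} then yields $\LD(A^{(j)}_{m_0-1}) \le \beta n_j$.

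In Case~2 ($K \in A^{(j)}_{m_1}$), both $A^{(j)}_{m_0-1}$ and $A^{(j)}_{m_0}$ should be all 0's. The Straddling Zag Lemma already gives $\LD(A^{(j)}_{m_0}) \le \beta n_j$ when $K - n_j/4 \in A^{(j)}_{m_1}$ and $\LD(A^{(j)}_{m_0}) \le n_j/12 - \beta n_j$ when $K - n_j/4 \in A^{(j)}_{m_0}$, so, combined with the $\le \beta n_j$ ones in $A^{(j)}_{m_0-1}$, the total number of 1's in the pair is safely under $5n_j/6$ and the halver delivers $\LD(A^{(j)}_{m_0-1}) \le \beta n_j$. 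The additional bounds on $\LD(A^{(j)}_{m_0})$ then follow by conservation: the 1's in $A^{(j)}_{m_0}$ after \textsf{Reduce} are at most the total in the pair beforehand, so $\LD(A^{(j)}_{m_0}) \le 2\beta n_j$ in the first subcase and $\LD(A^{(j)}_{m_0}) \le n_j/12$ in the second.

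The main obstacle is the Case~1 bookkeeping, since Lemma~\ref{lem:straddling-zag} does not directly bound the 1's still in $A^{(j)}_{m_0}$ when $K$ lies there. One has to assemble the right-side dirtiness total by cascading the Straddling Zag bound with the High-side Zag Lemma's geometric bounds across the subsequent subarrays, and verify that, together with the Concentration of Dirtiness contribution from the left, the sum stays inside the halver's $5n_j/6$ regime under the parameter choices $\delta \le 1/12$, $\epsilon \le 1/32$, and $\beta \le 1/180$.
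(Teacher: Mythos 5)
Your proposal follows essentially the same route as the paper's proof: the same case split on whether $K$ lies in $A^{(j)}_{m_0}$ or $A^{(j)}_{m_1}$, the same use of $\RD(A^{(j)}_{m_0-1})\le\beta n_j$ from the zig phase together with the Straddling Zag bounds on $A^{(j)}_{m_0}$ (respectively $A^{(j)}_{m_1}$), the observation that $K_{m_0}\ge n_j/2$ by minimality of $m_0$, the conservation argument pairing 0's to the right of $A^{(j)}_{m_0}$ with 1's to its left (with the geometric tail from Lemmas~\ref{lem:right-neighbor-zag} and~\ref{lem:high-side-zag}), and conservation of 1's within the pair to get the stated bounds on $\LD(A^{(j)}_{m_0})$. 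This matches the paper's argument, so only the routine arithmetic verifying the totals stay below $5n_j/6$ under $\delta\le 1/12$, $\epsilon\le 1/32$, $\beta\le 1/180$ remains to be filled in.
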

\begin{proof}
Since $i=m_0-1$, we are considering in this lemma
the result of calling \textsf{Reduce} on
$A^{(j)}_{i+1} = A^{(j)}_{m_0}$ and $A^{(j)}_i$.
Also, note that by previous lemmas, $\RD(A^{(j)}_i)\le \beta n_j$.
There are two cases.

\textbf{Case 1:} The cross-over point, $K$, is in 
$A^{(j)}_{m_1}$.
In this case, by the dirtiness invariant
and Lemma~\ref{lem:straddling-zig},
the total number of 1's
in $A^{(j)}_{i} \cup A^{(j)}_{m_0}$ at this point is 
either $n_j/12$ or $2\beta n_j$, depending respectively on whether $K-n_j/4$
indexes a cell in $A^{(j)}_{m_0}$ or not.
Thus, in this case, $\LD(A^{(j)}_{m_0})$ is bounded by the
appropriate such bound and
the $(\beta,5/6)$-halver algorithm, \textsf{Reduce}, is
effective to give us
\[
\LD(A^{(j)}_i) \le \beta n_j.
\]

\textbf{Case 2:} The cross-over point, $K$, is in $A^{(j)}_{m_0}$.
Suppose we were to sort the items currently in
$A^{(j)}_{m_0} \cup A^{(j)}_{i}$.
Then, restricted to these two subarrays,
we would get a cross-over point, $K'$,
which could be to the left, right, 
or possibly equal to the real one, $K$.
Let $K_{m_0}$ denote the index of $K$ in $A^{(j)}_{m_0}$.
Note that each 0 that is currently to the
right of $A^{(j)}_{m_0}$
implies there must be a corresponding $1$ currently placed somewhere
from $A^{(j)}_{1}$ to $A^{(j)}_{m_0}$,
possibly even 
in 
$A^{(j)}_{i}\cup A^{(j)}_{m_0}$.
The bad scenario with respect to dirtiness for
$A^{(j)}_{i}$ is when 1's are moved into
$A^{(j)}_{i}\cup A^{(j)}_{m_0}$.
By previous lemmas,
the number of such additional 1's is bounded by 
\[
\frac{\beta n_j}{1-8\delta} + \frac{\delta\beta n_j}{1-\delta} + n_j/12.
\]
Thus, since the number of 1's 
in $A^{(j)}_{m_0}$, based on the location of the cross-over point,
is supposed to be $n_j-K_{m_0}\le n_j/2$,
the total number of 1's currently in
$A^{(j)}_{i} \cup A^{(j)}_{m_0}$
is at most
\[
\frac{\beta n_j}{1-8\delta} + \frac{\delta\beta n_j}{1-\delta} + 7n_j/12
\le 5n_j/6,
\]
provided $\delta\le 1/12$, $\epsilon\le 1/32$, and $\beta\le 1/180$.
Therefore, the $(\beta,5/6)$-halver algorithm, \textsf{Reduce}, is
effective in this case to give us
\[
\LD(A^{(j)}_i) \le \beta n_j.
\]
\end{proof}

Finally, we have the following.

\begin{lemma}[Low-side Zag Lemma]
\label{lem:low-side-zag}
Suppose the dirtiness invariant is satisfied after the splitting step 
in iteration $j$. 
If 
$\delta\le 1/12$, $\epsilon\le 1/32$, and $\beta\le 1/180$,
then, for $i\le m_0-1$,
after the second (outer zag) phase in iteration $j$, 
\[
\LD(A^{(j)}_{i}) \le \delta D(A^{(j)}_i) + \delta^{m_0-i-1} \beta n_j.
\]
\end{lemma}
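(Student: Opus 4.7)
The plan is to establish the Low-side Zag Lemma by downward induction on $i$, mirroring the Low-side Zig Lemma but adapted to the fact that each subarray $A^{(j)}_i$ with $i \le m_0-1$ is processed twice during the outer-zag phase: first as the \emph{left} argument of \textsf{Reduce} when the loop variable is $\ell = i+1$ (the pair $(A^{(j)}_i, A^{(j)}_{i+1})$), and then as the \emph{right} argument when $\ell = i$ (the pair $(A^{(j)}_{i-1}, A^{(j)}_i)$). The target bound will therefore combine an attenuation contribution coming from the first touch with a conservation contribution coming from the second.

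First I would introduce an auxiliary quantity $X_i$ equal to the dirtiness of $A^{(j)}_i$ immediately after its first touch, and prove by induction on decreasing $i$ that $X_i \le \delta^{m_0-i-1}\beta n_j$. The base case $i = m_0-1$ is exactly the Left-neighbor Zag Lemma. For the inductive step, the inner zig-zag swap in step $\ell = i+1$ transfers the $X_{i+1}$ 1's currently in $A^{(j)}_{i+1}$ over to $A^{(j)}_i$; invoking the Concentration of Dirtiness Lemma together with the inductive bound on $X_{i+1}$ certifies that the total number of 1's in the pair stays well below $(5/6)n_j$, so the $(\delta, 5/6)$-attenuator property of \textsf{Reduce} contracts the 1's in its left argument by the factor $\delta$, giving $X_i \le \delta X_{i+1}$ and hence, by iteration, $X_i \le \delta^{m_0-i-1}\beta n_j$.

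Second, at the last touch of $A^{(j)}_i$ (step $\ell = i$), the inner zig-zag swap puts $\RD(A^{(j)}_{i-1})$ 1's into $A^{(j)}_i$ and $X_i$ 1's into $A^{(j)}_{i-1}$; by the Low-side Zig Lemma, $\RD(A^{(j)}_{i-1}) \le \delta D(A^{(j)}_i)$. Since compare-exchanges preserve the total number of 1's in the pair, the final dirtiness of $A^{(j)}_i$ is at most the post-swap total, yielding
\[
\LD(A^{(j)}_i) \;\le\; X_i + \RD(A^{(j)}_{i-1}) \;\le\; \delta^{m_0-i-1}\beta n_j + \delta D(A^{(j)}_i),
\]
which is the claimed bound. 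The main obstacle will be the careful bookkeeping at each attenuator invocation to certify that the total 1's in the pair sit comfortably under the $(5/6)n_j$ budget of \textsf{Reduce}, which ultimately justifies the parameter choices $\delta \le 1/12$, $\epsilon \le 1/32$, and $\beta \le 1/180$; a secondary subtlety is tracking that when step $\ell = i$ runs, $A^{(j)}_{i-1}$ is still in its post-outer-zig state (since the outer-zag loop visits indices from right to left and has not yet touched $A^{(j)}_{i-1}$), which is precisely what licenses the direct use of the Low-side Zig bound on $\RD(A^{(j)}_{i-1})$.
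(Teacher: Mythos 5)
Your proposal is correct and follows essentially the same route as the paper: a downward induction anchored at the Left-neighbor Zag Lemma, using the $(\delta,5/6)$-attenuator to contract the ``first-touch'' dirtiness by $\delta$ at each step, the Low-side Zig bound $\RD(A^{(j)}_{i-1})\le\delta D(A^{(j)}_i)$ for the not-yet-touched left neighbor, and conservation of 1's across the swap-plus-\textsf{Reduce} to obtain $\LD(A^{(j)}_i)\le \delta D(A^{(j)}_i)+\delta^{m_0-i-1}\beta n_j$. Your explicit bookkeeping of the two touches via $X_i$ is just a cleaner rendering of the intermediate quantity the paper carries through its induction.
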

\begin{proof}
The proof is by induction on $m_0-i$, starting with
Lemma~\ref{lem:left-neighbor-zag} as the basis of the induction.
Before doing the swapping for the inner zig-zag step for subarray $i$, 
by Lemma~\ref{lem:low-side-zig},
\[
\RD(A^{(j)}_{i-1}) \le \delta D(A^{(j)}_i)
\]
and
\[
\LD(A^{(j)}_{i}) \le \delta^{m_0-i-1} \beta n_j.
\]
Thus, after the swaps for the inner zig-zag and
the $(\delta,5/6)$-attenuator algorithm, \textsf{Reduce},
\[
\LD(A^{(j)}_{i}) \le \delta D(A^{(j)}_i)
+ \delta^{m_0-i-1} \beta n_j.
\]
and
\[
\LD(A^{(j)}_{i-1}) \le \delta^{m_0-i} \beta n_j.
\]
\end{proof}

This completes all the lemmas we need in order to calculate bounds
for $\epsilon$ and $\delta$ that will allow us to satisfy the
dirtiness invariant for iteration~$j+1$ if it is satisfied for
iteration~$j$.

\begin{lemma}
\label{lem:final}
Provided $\delta\le 1/12$, $\epsilon\le 1/32$, and $\beta\le 1/180$,
if the dirtiness invariant for iteration~$j$ is satisfied after the
splitting step for iteration~$j$,
then the dirtiness invariant for iteration~$j$ is satisfied after the
splitting step for iteration~$j+1$.
\end{lemma}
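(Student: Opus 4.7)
The plan is to combine the zig and zag lemmas already proved (Lemmas~\ref{lem:low-side-zig} through \ref{lem:low-side-zag}) into a bound on $\LD(A^{(j)}_i)$ for every $i$ at the end of iteration~$j$, and then verify, clause by clause, that the splitting step of iteration~$j+1$ forces the iteration-$(j+1)$ dirtiness invariants to hold. Throughout, I will use the fact that $n_{j+1}=n_j/2$, and identify the new uncertainty-interval indices $m'_0,m'_1$: since the new uncertainty interval has width $n_j/2+2$, its two subarrays $A^{(j+1)}_{m'_0}$ and $A^{(j+1)}_{m'_1}$ are children of whichever of $A^{(j)}_{m_0},A^{(j)}_{m_1}$ straddles the cross-over point $K$, so their common parent has after-zag dirtiness bounded by $n_j/6$ (or a bit better, in the favorable subcases of the Straddling Zag Lemma).

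First I would verify the first invariant, for a child $A^{(j+1)}_i$ outside the new uncertainty interval. There are two situations. If the parent $A^{(j)}_p$ lies outside the old uncertainty interval, then $d_{i,j+1}=d_{p,j}+1$ and the after-zag bound on the parent has the form $4^{d_{p,j}}\delta^{d_{p,j}-1}\beta n_j$; applying the $(\delta,5/6)$-attenuator clause of \textsf{Reduce} to the parent gives the child dirtiness at most $\delta$ times this, which equals $2\cdot 4^{d_{i,j+1}-1}\delta^{d_{i,j+1}-1}\beta n_{j+1}$, comfortably below the required $4^{d_{i,j+1}}\delta^{d_{i,j+1}-1}\beta n_{j+1}$. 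If instead the parent sits on the boundary (so the sibling of $A^{(j+1)}_i$ is one of $A^{(j+1)}_{m'_0},A^{(j+1)}_{m'_1}$), then $d_{i,j+1}=1$, the parent's dirtiness is governed by the Left-neighbor / Right-neighbor / Straddling Zag bounds, and I would use the $(\beta,5/6)$-halver clause of \textsf{Reduce} (or, when the parent is overfull, the Overflow Lemma together with the $\epsilon$-halver clause) to conclude a child bound of at most $4\beta n_{j+1}$, matching the required $4^{1}\delta^{0}\beta n_{j+1}$.

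Next I would verify the second and third invariants by analyzing the two children of the cross-over-straddling parent. The child that does not contain $K$ becomes $A^{(j+1)}_{m'_1}$ (respectively $A^{(j+1)}_{m'_0}$), and its dirtiness is controlled by the $\epsilon$-halver clause of \textsf{Reduce} together with the Overflow Lemma: the total misplaced mass funnelled into the parent from the zig and zag phases is bounded by combining the Concentration of Dirtiness Lemma with the after-zag bounds, and plugging into the Overflow Lemma yields a dirtiness of at most $\epsilon n_{j+1}+(1-\epsilon)\cdot(\text{overflow})$, which is below $n_{j+1}/6$ under the hypotheses $\delta\le 1/12$, $\epsilon\le 1/32$, $\beta\le 1/180$.

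The main obstacle, as with the individual zig/zag lemmas, is the boundary case $d_{i,j+1}=1$, where one is crossing from the ``exponentially decaying'' regime (attenuator-controlled) into the ``at most $n_j/6$'' regime (halver-controlled). There the parent's dirtiness is not small enough for the attenuator guarantee alone, and one must both budget the mass that can be swept into the parent via the outer-zig and outer-zag sweeps (using the Concentration of Dirtiness Lemma) and apply the Overflow Lemma carefully so that the resulting child bound still fits under $4\beta n_{j+1}$. Once this bookkeeping is done, the slack built into the factor-of-$4$ in the exponential bound and the margins in the constants $\delta\le 1/12$, $\epsilon\le 1/32$, $\beta\le 1/180$ are sufficient to push all three invariants through to iteration~$j+1$.
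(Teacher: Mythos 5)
There is a genuine gap in the step you use to push the first invariant from parent to children. For a parent $A^{(j)}_p$ far from the cross-over you quote its post-zag dirtiness as $4^{d_{p,j}}\delta^{d_{p,j}-1}\beta n_j$ (the invariant-form bound) and then claim that the \textsf{Reduce} call in the splitting step, via its $(\delta,5/6)$-attenuator property, cuts \emph{the child's} dirtiness to $\delta$ times that. But \textsf{Reduce}$(A^{(j+1)}_{2p-1},A^{(j+1)}_{2p})$ is asymmetric: on the low side it pushes misplaced 1's \emph{into} the right child, and on the high side it pushes misplaced 0's into the left child, so one of the two children can inherit essentially all of the parent's dirtiness and gets no $\delta$-factor reduction at all. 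Without that factor your arithmetic fails: the absorbing child would only be bounded by $2\cdot 4^{d_{p,j}}\delta^{d_{p,j}-1}\beta n_{j+1}$, which exceeds the required $4^{d_{p,j}+1}\delta^{d_{p,j}}\beta n_{j+1}$ whenever $\delta<1/2$. The paper avoids this by never asking the split-step \textsf{Reduce} to do this work: the Low-side and High-side Zag Lemmas already give the \emph{improved} post-zag bound $\delta D(A^{(j)}_i)+\delta^{\,\cdot}\beta n_j\le(4^{d_{i,j}}+1)\delta^{d_{i,j}}\beta n_j$, i.e.\ the extra factor of $\delta$ is earned during the zag phase, and then the halving $n_{j+1}=n_j/2$ alone yields $2(4^{d_{i,j}}+1)\delta^{d_{i,j}}\beta n_{j+1}\le 4^{d_{i,j}+1}\delta^{d_{i,j}}\beta n_{j+1}$.

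A second problem is your identification of the new uncertainty interval: it is not true that $A^{(j+1)}_{m'_0}$ and $A^{(j+1)}_{m'_1}$ are always children of the single subarray containing $K$. Since the new interval has width $n_{j+1}+2$, when $K$ lies within $n_j/4$ of the boundary between $A^{(j)}_{m_0}$ and $A^{(j)}_{m_1}$ the two new straddling subarrays have \emph{different} parents, and this is exactly the situation that forces the subcases on whether $K+n_j/4$ lies in $A^{(j)}_{m_0}$ or $K-n_j/4$ lies in $A^{(j)}_{m_1}$, where one must use the sharper conclusions of the Straddling Zag and Left-neighbor Zag Lemmas (bounds of $\beta n_j$ or $2\beta n_j$, not merely $n_j/6$) to give the relevant children the depth-one bound $4\beta n_{j+1}$ rather than $n_{j+1}/6$. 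Your sketch acknowledges that the boundary bookkeeping is the hard part but does not carry out this case analysis, and with the incorrect placement of $m'_0,m'_1$ the cases you would need (and the specific lemma conclusions they rely on) are not even identified; so the argument as written does not establish invariants~2 and~3, nor the $4\beta n_{j+1}$ bound for the siblings of the new straddling subarrays.
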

\begin{proof}
Let us consider each subarray, $A^{(j)}_i$, and its two children,
$A^{(j+1)}_{2i-1}$
and $A^{(j+1)}_{2i}$, at the point in the algorithm when we 
perform the splitting step.
Let $m'_0$ denote the index of the lowest-indexed subarray on level $j+1$ that
intersects the uncertainty interval, and
let $m'_1 \, (=m'_0+1)$ denote the index of the 
highest-indexed subarray on level $j+1$ that
intersects the uncertainty interval.
Note that we either have $m'_0$ and $m'_1$ both being children of $m_0$,
$m'_0$ and $m'_1$ both being children of $m_1$,
or $m'_0$ is a child of $m_0$ and $m'_1$ is a child of $m_1$.
That is, $m'_0=2m_0-1$, 
$m'_0=2m_0$, 
or $m'_0=2m_1-1=2m_0+1$,
and $m'_1=2m_0=2m_1-2$, 
$m'_1=2m_1-1$, 
or $m'_1=2m_1$,
\begin{enumerate}
\item
$i \le m_0-1$.
In the worst case,
based on the three possibilities for $m'_0$ and $m'_1$,
we need
\[
D(A^{(j+1)}_{2i-1}) 
\le 4^{d_{2i-1,j+1}}\delta^{d_{2i-1,j+1}-1}\beta n_j
= 4^{d_{i,j}+1}\delta^{d_{i,j}}\beta n_j
\]
and
\[
D(A^{(j+1)}_{2i}) 
\le 4^{d_{2i,j+1}}\delta^{d_{2i,j+1}-1}\beta n_j
= 4^{d_{i,j}+1}\delta^{d_{i,j}}\beta n_j.
\]
By Lemma~\ref{lem:low-side-zag},
just before the splitting step for $A^{(j)}_i$, we have
\[
\LD(A^{(j)}_{i}) \le \delta D(A^{(j)}_i) +
                \delta^{m_0-i-1}\beta n_j
		\le 4^{d_{i,j}}\delta^{d_{i,j}}\beta n_j
                 + \delta^{m_0-i-1}\beta n_j
		\le (4^{d_{i,j}}+1)\delta^{d_{i,j}}\beta n_j,
\]
and we then partition $A^{(j)}_i$, so that $n_{j+1}=n_j/2$.
Thus, for either $k=2i-1$ or $k=2i$,
we have
\begin{eqnarray*}
D(A^{(j+1)}_{k}) &\le& 
		2(4^{d_{i,j}}+1)\delta^{d_{i,j}}\beta n_{j+1} \\
		&\le & 4^{d_{i,j}+1}\delta^{d_{i,j}}\beta n_{j+1},
\end{eqnarray*}
which satisfies the dirtiness invariant for the next iteration.

\item
$i\ge m_1+1$.
In the worst case,
based on the three possibilities for $m'_0$ and $m'_1$,
we need
\[
D(A^{(j+1)}_{2i-1}) 
\le 4^{d_{2i-1,j+1}}\delta^{d_{2i-1,j+1}-1}\beta n_j
= 4^{d_{i,j}+1}\delta^{d_{i,j}}\beta n_j
\]
and
\[
D(A^{(j+1)}_{2i}) 
\le 4^{d_{2i,j+1}}\delta^{d_{2i,j+1}-1}\beta n_j
= 4^{d_{i,j}+1}\delta^{d_{i,j}}\beta n_j.
\]
By Lemma~\ref{lem:high-side-zag},
just before the splitting step for $A^{(j)}_i$, we have
\[
\LD(A^{(j)}_{i}) \le \delta D(A^{(j)}_i) +
                \delta^{i-m_1-1}\beta n_j
		\le 4^{d_{i,j}}\delta^{d_{i,j}}\beta n_j
                 + \delta^{i-m_1-1}\beta n_j
		\le (4^{d_{i,j}}+1)\delta^{d_{i,j}}\beta n_j,
\]
and we then partition $A^{(j)}_i$, so that $n_{j+1}=n_j/2$.
Thus, for either $k=2i-1$ or $k=2i$,
we have
\begin{eqnarray*}
D(A^{(j+1)}_{k}) &\le& 
		2(4^{d_{i,j}}+1)\delta^{d_{i,j}}\beta n_{j+1} \\
		&\le & 4^{d_{i,j}+1}\delta^{d_{i,j}}\beta n_{j+1},
\end{eqnarray*}
which satisfies the dirtiness invariant for the next iteration.

\item
$i = m_0$.
In this case, there are subcases.
\begin{enumerate}
\item
Suppose $K+n_j/4$ indexes a cell in $A^{(j)}_{m_0}$.
Then, by Lemma~\ref{lem:straddling-zag},
$\LD(A^{(j)}_{m_1}) \le \beta n_j$.
In this case, $m'_0$ and $m'_1$ are both children of $m_0$
and we need $A^{(j+1)}_{2i-1}$
to have dirtiness at most $n_{j+1}/6=n_j/12$.
The number of 1's in 
$A^{(j)}_i$ is bounded by $n_j/2$ (or otherwise, $i=m_1$)
plus the number
of additional 1's that may be here because of 0's that remain to the right,
which is bounded by 
\[
n' = n_j/2  + \beta n_j + \frac{\beta n_j}{1-8\delta} 
                   + \frac{\delta\beta n_j}{1-\delta},
\]
provided $\delta\le 1/12$, $\epsilon\le 1/32$, and $\beta\le 1/180$.
If $n'\le n_j/2$, then an $\epsilon$-halver operation applied after 
the split, with $\epsilon\le 1/6$,
will satisfy the dirtiness invariants for $m_0'$ and $m_1'$.
If, on the other hand, $n'>n_j/2$,
then an $\epsilon$-halver operation applied after 
the split will give us 
\begin{eqnarray*}
D(A^{(j+1)}_{2i-1}) &\le& 
                   \epsilon n_{j+1} + (1-\epsilon)\cdot (n'-n_{j+1}) \\
 &\le &          \epsilon n_{j+1} + (1-\epsilon)\cdot 
              \left( 2\beta n_{j+1} + \frac{2\beta n_{j+1}}{1-8\delta} 
                   + \frac{2\delta\beta n_{j+1}}{1-\delta} \right) \\
 &\le& n_{j+1}/6,
\end{eqnarray*}
provided $\delta\le 1/12$, $\epsilon\le 1/32$, and $\beta\le 1/180$.

\item
Suppose $K-n_j/4$ indexes a cell in $A^{(j)}_{m_1}$.
Then, by Lemma~\ref{lem:left-neighbor-zag},
$\LD(A^{(j)}_{m_0}) \le 2\beta n_j$.
In this case, we need 
$D(A^{(j+1)}_{2i}) \le 4\beta n_{j+1}$
and $D(A^{(j+1)}_{2i-1}) \le 4\beta n_{j+1}$, both of which
which follow from the above
bound.

\item
Suppose neither of the previous subcases hold.
Then we have two possibilities:
\begin{enumerate}
\item
Suppose $K$ indexes a cell in $A^{(j)}_{m_1}$. 
Then $\LD(A^{(j)}_{m_0}) \le n_j/12$,
by Lemma~\ref{lem:left-neighbor-zag}.
In this case, we need $D(A^{(j+1)}_{2i})\le n_{j+1}/6$, which follows
immediately from this bound, and we also need
$D(A^{(j+1)}_{2i-1})\le 4\beta n_{j+1}$, which follows
by our performing a \textsf{Reduce} step,
which is a $(\beta,5/6)$-halver, after we do our split.

\item
Suppose $K$ indexes a cell in $A^{(j)}_{m_0}$ (but $K+n_j/4$ indexes a cell
in $A^{(j)}_{m_1}$). Then, 
by Lemma~\ref{lem:straddling-zag},
$\LD(A^{(j)}_{m_1}) \le n_j/12$.
In this case, we 
need
$D(A^{(j+1)}_{2i-1})\le 4\beta n_{j+1}$.
Here, the dirtiness of
$A^{(j+1)}_{2i-1}$ is determined by the number of 1's 
it contains, which is bounded by the intended number of 1's, 
which itself is bounded by
$n_j/4 = n_{j+1}/2$, plus the number of 0's currently to the 
right of $A^{(j)}_{m_0}$, which, all together, is at most
\[
n_{j+1}/2 + n_{j+1}/6 
          + \frac{2\beta n_{j+1}}{1-8\delta} 
	  + \frac{2\delta\beta n_{j+1}}{1-\delta}
\le 5 n_{j+1}/6,
\]
provided $\delta\le 1/12$, $\epsilon\le 1/32$, and $\beta\le 1/180$.
Thus, the \textsf{Reduce} algorithm, which is a $(\beta,5/6)$-halver,
we perform after the split will give us
$D(A^{(j+1)}_{2i-1})\le \beta n_{j+1}$.
\end{enumerate}
\end{enumerate}
\item
$i=m_1$.
In this case, there are subcases.
\begin{enumerate}
\item
Suppose $K+n_j/4$ indexes a cell in $A^{(j)}_{m_0}$.
Then, by Lemma~\ref{lem:straddling-zag},
$\LD(A^{(j)}_{m_1}) \le \beta n_j$.
In this case, we need 
$D(A^{(j+1)}_{2i-1}) \le 4\beta n_{j+1}$, 
and $D(A^{(j+1)}_{2i}) \le 4\beta n_{j+1}$, 
which follows from the above
bound.
\item
Suppose $K-n_j/4$ indexes a cell in $A^{(j)}_{m_1}$.
Then, by Lemma~\ref{lem:left-neighbor-zag},
$\LD(A^{(j)}_{m_0}) \le 2\beta n_j$.
In this case, $m'_0$ and $m'_1$ are both children of $m_1$, and the cross-over
is in $A^{(j+1)}_{m'_0}$.
We therefore need $A^{(j+1)}_{2i}$
to have dirtiness at most $n_{j+1}/6=n_j/12$.
The number of 0's in $A^{(j)}_i$ is at most $n_j/2$ (or this
wouldn't be $m_1$), plus the number
of additional 0's that are here because of 1's to the left of $m_1$,
which is bounded by 
\[
n' = n_j/2 + \beta n_j + \frac{\beta n_j}{1-8\delta} 
                         + \frac{\delta\beta n_j}{1-\delta}.
\]
If $c'\le n_{j+1}=n_j/2$, then the $\epsilon$-halver will reduce 
the dirtiness so that
$D(A^{(j+1)}_{m'_1}) \le \epsilon n_{j+1}\le n_{j+1}/6$, if $\epsilon\le 1/6$.
If, on the other hand, $n'> n_{j+1}$, then, by Lemma~\ref{lem:overflow},
$D(A^{(j+1)}_{m'_1})$ will
be reduced to be at most
\begin{eqnarray*}
\epsilon n_{j+1} + (1-\epsilon) \cdot (n'-n_{j+1})
&\le& \epsilon n_{j+1} + (1-\epsilon)\cdot 
          \left( 2\beta n_{j+1} + \frac{2\beta n_{j+1}}{1-8\delta} 
                         + \frac{2\delta\beta n_{j+1}}{1-\delta}\right) \\
&\le & \frac{n_{j+1}}{6},
\end{eqnarray*}
provided $\delta\le 1/12$, $\epsilon\le 1/32$, and $\beta\le 1/180$.
\item
Suppose neither of the previous subcases hold.
Then we have two possibilities:
\begin{enumerate}
\item
Suppose $K$ indexes a cell in $A^{(j)}_{m_0}$. 
Then $\LD(A^{(j)}_{m_1}) \le n_j/12$,
by Lemma~\ref{lem:straddling-zag}.
In this case, we need $D(A^{(j+1)}_{2i-1})\le n_{j+1}/6$, which follows
immediately from this bound, and we also need
$D(A^{(j+1)}_{2i})\le 4\beta n_{j+1}$, which follows
by our performing a \textsf{Reduce} step after we do our split.
\item
Suppose $K$ indexes a cell in $A^{(j)}_{m_1}$. Then, 
by Lemma~\ref{lem:left-neighbor-zag},
$\LD(A^{(j)}_{m_0}) \le n_j/12$.
In this case, we 
need
$D(A^{(j+1)}_{2i})\le 4\beta n_{j+1}$.
Here, the dirtiness of
$A^{(j+1)}_{2i-1}$ is determined by the number of 0's 
it contains, which is bounded by the proper number of 0's, 
which is at most
$n_j/4 = n_{j+1}/2$, plus the number of 1's currently to the 
left of $A^{(j)}_{m_1}$, which, all together, is at most
\[
n_{j+1}/2 + n_{j+1}/6 + \frac{2\beta n_{j+1}}{1-8\delta} 
          + \frac{2\delta\beta n_{j+1}}{1-\delta} 
\le 5 n_j/6,
\]
provided $\delta\le 1/12$, $\epsilon\le 1/32$, and $\beta\le 1/180$.
Thus, the \textsf{Reduce} algorithm we perform after the split will give us
$D(A^{(j+1)}_{2i-1})\le \beta n_{j+1}$.
\end{enumerate}
\end{enumerate}
\end{enumerate}
\end{proof}

Putting everything together, we establish 
the following.

\bigskip
\noindent
\textbf{Theorem~\ref{thm:final}}:
\textit{
If it is implemented using a linear-time $\alpha$-halver, \textsf{Halver}, for
$\alpha\le 1/15$,
Zig-zag Sort correctly sorts an array of $n$ comparable
items in $O(n\log n)$ time.}

\medskip
\begin{proof}
Take $\alpha\le 1/15$, for \textsf{Halver} being an $\alpha$-halver,
so that \textsf{Reduce}
is simultaneously an $\epsilon$-halver, a
$(\beta,5/6)$-halver, and 
a $(\delta,5/6)$-attenuator, for 
$\delta\le 1/12$, $\epsilon\le 1/32$, and $\beta\le 1/180$.
Such bounds achieve the necessary constraints 
for Lemmas~\ref{lem:low-side-zig} to \ref{lem:final}, given above,
which establishes the dirtiness invariants for each iteration of
Zig-zag Sort.
The correctness follows, then, by noticing that satisfying the
dirtiness invariant after the last iteration of Zig-zag Sort implies
that the array $A$ is sorted.
\end{proof}

\end{appendix}
\fi

\end{document}